\newcommand{\R}{\ensuremath{{\mathbb R}}}
\newcommand{\E}{\ensuremath{{\mathbb E}}}
\newcommand{\N}{\ensuremath{{\mathbb N}}}
\newtheorem{theorem}{Theorem}[section]
\newtheorem{definition}{Definition} [section]
\newtheorem{lemma}[theorem]{Lemma}
\newtheorem{proposition}[theorem]{Proposition}
\newtheorem{remark}[theorem]{Remark}
\begin{document}

\begin{center}
{\Large{\textbf{Optimal relaxed portfolio strategies for growth rate maximization problems with transaction costs}}}
\end{center}
\begin{center}\vspace{.5cm}{\large{ S\"oren Christensen\textsuperscript{ *}}} \\
\large\textit{Christian-Albrechts-University Kiel\let\thefootnote\relax\footnotetext{\textsuperscript{*} Mathematisches Seminar, Christian-Albrechts-Universit\"at zu Kiel, Ludewig-Meyn-Str. 4,
24098 Kiel, Germany, e-mail: christensen@math.uni-kiel.de }\\}
\vspace{.8cm}{\large{ Marc Wittlinger\textsuperscript{ **}}}\\
\large\textit{Ulm University \let\thefootnote\relax\footnotetext{\textsuperscript{**} Institute of Mathematical Finance, Ulm University, Helmholtzstra\ss e 18, 89081 Ulm, Germany, email: marc.wittlinger@uni-ulm.de }\\}

\end{center}

\vspace*{4mm}

In this paper we investigate a new class of growth rate maximization problems based on impulse control strategies such that the average number of trades per time unit does not exceed a fixed level. Moreover, we include proportional transaction costs to make the portfolio problem more realistic. We provide a {Verification Theorem}  to compute the optimal growth rate as well as an optimal trading strategy. Furthermore, we prove the existence of a constant boundary strategy which is optimal. At the end, we compare our approach to other discrete-time growth rate maximization problems in numerical examples. It turns out that constant boundary strategies with a small average number of trades per unit perform nearly as good as the classical optimal solutions with infinite activity. \vspace{.8cm}

{ Key Words:} portfolio optimization; growth rate; optimal time discretization; impulse control; constant boundary strategies; {transaction costs}\\[2mm]

{ AMS subject classification:}   49N25, 93E20, 91G10

\section{Introduction}
One of the fundamental results in portfolio optimization goes back to {\cite{RMP}}. He proved in the framework of a Black Scholes market that the expected utility can be maximized by keeping the risky fraction process, which indicates  the fraction of the investor's wealth invested in the stock, constant. The same result holds true if the objective is changed to the asymptotic growth rate, i.e. the investor maximizes
\[\liminf_{T\rightarrow\infty}\frac{1}{T}\;\E \big{[} \log (X_T)\big{]},\]
where $X_T$ denotes the investor's wealth at time $T$. This criterion goes back to Kelly (\cite{K}) and will be our objective in the following considerations. The optimal strategy of keeping the risky fraction process constant has one large drawback. The investor has to trade at all times since the stock price fluctuates at all times and so the investor's wealth changes at all times, too. Obviously this is not realizable in practice.

A first approach to avoid this problem is to restrict trading to discrete time points. This approach may be accomplished in different ways. For deterministic equidistant time points, this was carried out in \cite{relaxed}. A further idea of limiting the trading times is the widely used approach to restrict the trading times to those at which an independent Poisson process jumps. Such situations are for example studied in \cite{MatLog, MatGrow} and \cite{PTCM}. Both {approaches }are revised in Section \ref{sec:discrete}.

A second approach is based on the fact that transaction costs play a major role in real financial markets. The most natural type of costs is given by proportional transaction costs, that is the investor has to pay a fixed fraction of his transaction volume at each transaction. There is a wealth of papers dealing with different aspects of models with proportional transaction costs. For example see \cite{assaf}, \cite{DavisNorman}, \cite{SS}, {\cite{AST},  \cite{KK}, \cite{KM}.} By introducing the mentioned proportional transaction costs into the model, transactions of infinite variation lead to instantaneous ruin, and therefore cannot be optimal anymore. Moreover, it turns out that the optimal strategy is given in the following way: If the risky fraction process evolves within a predetermined interval, then it is optimal not to trade. If the risky fraction process reaches one of the interval boundaries, then infinitesimal trading occurs such that the fraction of wealth is reflected at this boundary. Unfortunately, this strategy is still not realizable in practice, since it involves continuous trading at the boundaries.{ One way
out is to introduce artificial fixed transaction costs, which punish high frequent trading, see {\cite{MP}, \cite{K98}, \cite{OS}, }\cite{IS06,IS06_2}, \cite{T06,T08}, and \cite{DDS}}. The advantage of using fixed transaction costs that are proportional to the investors wealth is that the optimal strategies turn out to be realizable and easy to describe: They are determined by four parameters $a<\alpha\leq \beta<b$. Whenever the risky fraction process reaches one of the boundaries $a$ resp. $b$, the investor trades such that the new risky fraction equals $\alpha$ resp. $\beta$. These strategies are called constant boundary strategies and fall into the class of impulse control strategies. A drawback of this model is that the use of fixed transaction cost cannot be justified in most real world situations.

The aim of this paper is bringing together the two approaches described above. We consider a portfolio optimization model with the realistic pure proportional transaction costs (including the case of no transaction costs). Then we restrict the set of trading strategies to {relaxed strategies such} that the average number of trades per time unit does not exceed a fixed level $\nicefrac{1}{h}$, that is
\begin{equation}\label{eq:strategies}
\limsup_{T\rightarrow\infty} \;\frac{1}{T}\;\E(|\{n:\tau_n\leq T\}|)\;\le\; \frac{1}{h},
\end{equation}
where the (random) trading points $\tau_0,\tau_1,...$ are modeled as stopping times and $h$ is some positive constant. {The advantages of considering this class of strategies is the fact that they are easy to implement and they do not include unrealizable high frequency trading.} {On the other hand, they are very flexible.} For example, whenever the asset price fluctuates a lot, the investor adjusts his portfolio necessarily very often; and in flat market times, the investor can be relaxed. From a mathematical point of view, the problem we are faced with is an impulse control problem where the policy set is restricted according to \eqref{eq:strategies}. To the best of the authors' knowledge, such problems have not been discussed in the literature, yet. We provide a Verification Theorem to verify the growth rate as well as an optimal strategy. Moreover, we prove the existence of an optimal strategy that is of constant boundary type, whose parameters can be computed as a solution to a certain system of equations. It turns out that this system of equations is similar to those considered for problems with both fixed and proportional transaction costs with one additional free parameter.

The structure of this article is as follows. In the short Section \ref{sec:merton}, we introduce the Merton problem and state its well known  solution with and without transaction costs for later reference. Section \ref{optimal_investor} contains the main results of this article. After formulating the problem and collecting some known facts about constant boundary strategies, we formulate the Verification Theorem and apply it to our setting by showing that constant boundary strategies are optimal. This is remarkable, since these strategies are very easy to describe and may be considered as easier to handle than the discrete-time strategies discussed in Section \ref{sec:discrete}. We end this section by treating an explicit example for a certain set of parameters. In the final Section \ref{sec:example}, we compare the asymptotic growth rates for the optimal strategy obtained in Section \ref{optimal_investor} and the existing results from Subsection \ref{sec:discrete}.

It turns out, that the optimal strategy performs much better than the other strategies for different realistic sets of parameters. More important, one can see that in a market with proportional transaction costs, the \glqq relaxed\grqq\; investor who uses a constant boundary strategy nearly performs as good as an investor who uses the optimal strategy without restrictions described above, even for larger values of $h$.

\section{The Merton problem}\label{sec:merton}
Let us consider a market in which two assets are traded continuously. One of the assets is a risky asset, called stock, with price
$ S=(S_t)_{t\ge 0} $ satisfying the stochastic differential equation
\begin{equation*}
dS_t = \mu S_tdt + \sigma S_tdW_t\;, \qquad   S_{0} = s_0 > 0\;,
\end{equation*}
where $ \mu $ and $ \sigma $ are constants and $ W=(W_t)_{t\ge 0} $ is a one dimensional standard Brownian motion on a complete filtered probability space $ (\Omega,\mathcal{F},(\mathcal{F}_t)_{t\ge 0},\mathds{P}) $ satisfying the usual conditions. The other asset, called bond, is a riskless asset which is chosen as numeraire. Hence the bond has the price $B=(B_t)_{t\ge 0} \equiv  1$. We imagine an investor with initial wealth $ x_0 >0 $ who invests his wealth in the above market and call him a {\bf Merton-investor}.

\subsection{Merton-investor without transaction costs}
Since we assume no transaction costs and a self-financing portfolio, the Merton-investor's wealth process $X=(X_t)_{t\ge 0}$ is given by
\begin{equation*}
X^{\pi}_{t} = x_0 \cdot \exp \bigg{[} \int \limits_0^t g(\pi_s) ds + \int \limits_0^t \pi_s \sigma dW_s  \bigg{]}\;,
\end{equation*}
where the trading strategy $\pi = (\pi_t)_{t\ge 0}$ is assumed to be a predicable process which indicates the fraction of wealth invested in the stock at time $t$ and $g(x)=x(\mu-\frac{1}{2}\sigma^2x)$. Finally, the Merton-investor is going to maximize his growth rate, i.e. he is interested in
\begin{align}
V^{M} = \sup_{\pi}\liminf_{T\rightarrow \infty} \frac{1}{T}\E\big{(} \log(X^{\pi}_T) \big{)} \;.
\label{merton}
\end{align}
By a pointwise maximization, it directly follows that the solution of (\ref{merton}) is given by
\begin{align*}
V^{M} = \liminf_{T\rightarrow \infty} \frac{1}{T}\E\big{(} \log(X^{\pi^{*}}_T) \big{)}= \frac{\mu^2}{2 \sigma^2}\;,
\end{align*}
where the optimal trading strategy $\pi^{*} = (\pi_t)_{t\ge 0}$ equals the Merton ratio $\tfrac{\mu}{\sigma^2}$.

In the special case when $\mu=8\%$ and $\sigma=40\%$, the optimal growth rate equals $2\%$. Further the investor has to put at all times $50\%$ of his wealth in the stock.

\subsection{Merton-investor with proportional transaction costs}
As in \cite{AST} we consider the Merton problem with proportional transaction costs. This means that we have to pay a fixed fraction $\gamma\in[0,1)$ for each transaction volume from the bond account. The corresponding asset price dynamics are now given by the following SDEs:
\begin{align*}
dB_t &= (1-\gamma)dU_t -(1+\gamma)dZ_t\;,  \qquad \qquad \;  B_{0}= 1\;, \\
dS_t &= \mu S_tdt + \sigma S_tdW_t+dZ_t - dU_t\;,   \qquad  S_{0} = s_0 > 0\;,
\end{align*}
where $Z,U$ are non-decreasing adapted c\`{a}dl\`{a}g processes representing the cumulative purchase and sale of the stock at time $t$, respectively. Under the assumption that there is no borrowing and shortselling, the Merton-investor is going to maximize his growth rate, i.e. he is interested in
\begin{align}
V^{M}_c = \sup_{(Z,U)}\liminf_{T\rightarrow \infty} \frac{1}{T}\E\big{(} \log(X^{\pi}_T) \big{)} \;.
\label{merton2}
\end{align}

From \cite{AST}[Section 9.2] we get an explicit solution of problem (\ref{merton2}) when the Merton ratio $\pi^{*}=\tfrac{\mu}{\sigma^2}\in(0,1)$. Let $f:\mathds{R}\rightarrow\mathds{R}$ be given by \begin{align*}
f(x) = \bigg{(}\frac{2\pi^{*}-x}{x} \bigg{)}^{\frac{2\pi^{*}}{2\pi^{*}-1}}\cdot\bigg{(}\frac{1-2\pi^{*}+x}{1-x} \bigg{)}^{\frac{2(1-\pi^{*})}{2\pi^{*}-1}}-1-\frac{2\gamma}{1-\gamma}\;.
\end{align*}
Further let $b$ be the zero of $f$ on $[0,1]$ and $a:=[2\pi^{*}-b]/[1+\tfrac{2\gamma}{1-\gamma}(1-2\pi^{*}+b)]$. Then the optimal growth rate $V^{M}_c$ equals $b\sigma^2(\pi^{*}-b/2)$ and the optimal policy is a constant boundary policy with upper bound $b$ and lower bound $a$, i.e.
\begin{itemize}
\item[-] If the fraction of wealth invested in the stock lies in $(a,b)$ then the investor does not trade. $(a,b)$ is called no transaction region.
\item[-] If the fraction process equals one of the bounds $a$ or $b$ then the investor starts continuous-time trading such the the fraction process does not exit the no transaction region.
\item[-] If the fraction of wealth invested in the stock lies outside the interval $[a,b]$ then the investor instantaneously trades to bring the fraction process to the closest boundary $a$ or $b$.
\end{itemize}

\section{The optimal relaxed investor with transaction costs}\label{optimal_investor}
Now, we discuss the question whether there are optimal strategies in the class of strategies where the average number of transactions per time unit is limited to $1/h$. To answer this question, we first model this situation, and then solve the problem in this section.\\
The class of strategies we consider are impulse control strategies, i.e. sequences $K=(\tau_n,\eta_n)_{n\in\N_0}$ of stopping times $0=\tau_0\leq \tau_1\leq ...\nearrow\infty$ with $\tau_n<\tau_{n+1}$ on $\{\tau_n<\infty\}$ for all $n$, and $\mathcal{F}_{\tau_n}$-measurable random variables $\eta_n\in (0,1)$. The stopping times $\tau_n$ describe the trading times, i.e. the time points when we adjust the fraction of wealth invested in the stock; $\eta_n$ is the new fraction of wealth invested in the stock at $\tau_n$, i.e. $X_{\tau_n}=\eta_n$. Between each two trading times, the risky fraction process runs uncontrolled with dynamic
\[d\pi_t=\pi_t(1-\pi_t)(\mu-\sigma^2\pi_t)dt+\pi_t(1-\pi_t)\sigma dW_t,\;\;\;t\geq 0.\]
We denote the set of all impulse control strategies by $\mathcal{K}$ and denote by $\E^K$ the expectation for the process controlled according to $K=(\tau_n,\eta_n)_{n\in\N_0}$.\\
As discussed in the introduction, we want to introduce possible proportional transaction costs into our model, that is, at each trading time we have to pay a fixed fraction $\gamma\in[0,1)$ of our transaction volume from our bond account. For a more detailed discussion of the transaction cost structure, the impulse control strategies and a comparison to other representation of the strategies, we refer to \cite{IS06} and \cite{IS06_2}, Section 3. In the later reference, the following  result on the asymptotic growth rate is found as an easy application of It\^o's formula -- see \cite{IS06_2}, Proposition 3.1, Formula (3.11):

\begin{proposition}\label{repr_growth}
 For each $K\in\mathcal{K}$ the asymptotic growth rate of the controlled process with starting state $\pi$ is given by
 \[r_K=\liminf_{t\rightarrow\infty}\frac{1}{T}\E^K_\pi\left(\int_0^T g(\pi_t)dt+\sum_{n:\tau_n\leq T}\Gamma(\pi_{\tau_n},\eta_n)\right),\]
 where \[g(x)=x(\mu-\frac{1}{2}\sigma^2x)\;\;\mbox{ and }\;\;
\Gamma(x,y)=\begin{cases}
 \log\frac{1-\gamma x}{1-\gamma y},\;\;\;&y<x,\\
  \log\frac{1+\gamma x}{1+\gamma y},\;\;\;&y\geq x.
 \end{cases}\]
 $r_K$ is independent of the starting state $\pi$.
 \end{proposition}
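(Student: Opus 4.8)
The plan is to derive the representation directly from It\^o's formula applied to $\log X_t$ between consecutive trading times, together with a careful accounting of the jumps in $\log X$ caused by the impulses and the associated transaction costs. First I would fix $K=(\tau_n,\eta_n)_{n\in\N_0}\in\mathcal K$ and write the log-wealth as a sum of contributions over the intervals $[\tau_n,\tau_{n+1})$ plus the jumps at the $\tau_n$. On each such interval the risky fraction runs uncontrolled with the stated SDE, and the self-financing dynamics give, exactly as in the no-transaction-cost Merton case,
\[
\log X_{\tau_{n+1}-}-\log X_{\tau_n}=\int_{\tau_n}^{\tau_{n+1}}g(\pi_t)\,dt+\int_{\tau_n}^{\tau_{n+1}}\pi_t\sigma\,dW_t,
\]
with $g(x)=x(\mu-\tfrac12\sigma^2x)$ as defined. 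The point of this step is that $\log X$ accumulates a drift term $\int g(\pi_t)\,dt$ plus a martingale term whose expectation vanishes (after a localization argument).

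Second I would compute the jump of $\log X$ at a trading time $\tau_n$. Here one uses the transaction cost structure: rebalancing the fraction from $\pi_{\tau_n}=\pi_{\tau_n-}$ to $\eta_n$ requires buying or selling stock, and paying $\gamma$ times the transaction volume out of the bond. A direct bookkeeping of how wealth changes under this rebalancing yields exactly $\log X_{\tau_n}-\log X_{\tau_n-}=\Gamma(\pi_{\tau_n},\eta_n)$ with the two cases $y<x$ (a sale, factor $\tfrac{1-\gamma x}{1-\gamma y}$) and $y\ge x$ (a purchase, factor $\tfrac{1+\gamma x}{1+\gamma y}$); this is the content of Formula (3.11) in \cite{IS06_2} and I would cite it rather than redo the elementary but slightly tedious algebra. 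Summing the interval contributions and the jump contributions from time $0$ up to $T$ gives
\[
\log X_T=\log x_0+\int_0^T g(\pi_t)\,dt+\sum_{n:\tau_n\leq T}\Gamma(\pi_{\tau_n},\eta_n)+M_T,
\]
where $M_T=\int_0^T\pi_t\sigma\,dW_t$.

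Third I would take expectations $\E^K_\pi$, divide by $T$, take $\liminf_{T\to\infty}$, and argue that the stochastic-integral term does not contribute: since $\pi_t\in(0,1)$ is bounded, $M$ is a square-integrable martingale (or at worst a local martingale controlled by a standard localization-plus-Fatou argument), so $\frac1T\E^K_\pi(M_T)=0$, and the constant $\tfrac1T\log x_0$ vanishes in the limit. This yields the claimed formula for $r_K$. Finally, independence of the starting state $\pi$ follows because the uncontrolled diffusion for $\pi_t$ on $(0,1)$ is positive recurrent (its invariant behavior does not depend on the starting point) and any impulse strategy is free to bring the process to a fixed point in finite expected time; more concretely, two runs started from different points can be coupled after an a.s.\ finite time with only finitely many extra trades, whose total cost contributes $0$ to the $\frac1T$-limit, so the $\liminf$ is the same.

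The main obstacle I expect is the jump computation in the second step: one has to be scrupulous about which account the proportional cost $\gamma$ is charged to and about the sign conventions for buying versus selling, so that the two branches of $\Gamma$ come out exactly right; everything else (the It\^o expansion on trading-free intervals, killing the martingale term, and the starting-state independence) is routine. Since the statement is quoted verbatim from \cite{IS06_2}, in the write-up I would keep this short, reproduce the one-interval It\^o identity and the jump identity, and refer to \cite{IS06,IS06_2} for the detailed transaction-cost bookkeeping.
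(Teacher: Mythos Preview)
Your proposal is correct and matches the paper's own approach: the paper does not give a self-contained proof but simply remarks that the representation ``is found as an easy application of It\^o's formula'' and cites \cite{IS06_2}, Proposition~3.1, Formula~(3.11); your sketch is precisely that It\^o-plus-jump-bookkeeping computation, with the same citation for the transaction-cost jump term~$\Gamma$. The only thing you add beyond the paper is the coupling argument for independence of the starting state, which the paper leaves implicit in the citation.
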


 Although, impulse control strategies restrict trading times to discrete time points, there is no restriction to the average number of transaction per time unit. Obviously, this is not realistic. On the other hand, no investor would restrict himself to trade only at fixed time point (say, weekly). Under some circumstances, he will adjust his portfolio quite often, under other circumstances, he will perhaps stop trading for a longer time, depending on the market behavior. But in the long time average, we assume that he wants to trade not more than each $h$ time units. Therefore, for  $h>0$ we  introduce the set $\mathcal{K}_h$ of impulse control strategies $K=(\tau_n,\eta_n)_{n\in \N}$ fulfilling
\[\limsup_{T\rightarrow\infty} \frac{1}{T}\E^K_\pi(|\{n:\tau_n\leq T\}|)\le \frac{1}{h}{\mbox{ for all }\pi},\]
that is, in the long term average we trade not more than $1/h$ times per time unit. Our goal is to maximize the asymptotic growth rate in the class $\mathcal{K}_h$, i.e., we want to express
\begin{equation}\label{eq:value}
V^{o}(h):=\sup_{K\in\mathcal{K}_h}r_K
\end{equation}
explicitly, and to find $K\in\mathcal{K}_h$ such that $V^{o}(h)=r_K$.

 \subsection{Constant boundary strategies}\label{subs:constant_boundary}
Natural examples of impulse control strategies are given by constant boundary strategies. These strategies are given by four parameters $0<a<\alpha\leq \beta<b<1$ as follows: Whenever the risky fraction process $(\pi_t)_{t\geq 0}$ falls below $a$, the process is shifted back to $\alpha$, and whenever it exceeds $b$, it is shifted to $\beta$, formally
\begin{align*}
&\tau_n=\inf\{t\geq \tau_{n-1}:\pi_t\not\in(a,b)\}\\
&\eta_n=\phi(\pi_{\tau_n}),
\end{align*}
with $\phi(x)=\alpha$ for $x\leq a$ and $\phi(x)=\beta$ for $x\geq b$. In many impulse control problems, constant boundary strategies turn out to be optimal, see \cite{Ko} for an overview. In our setting (with additional fixed transaction costs) this class of strategies was studied in detail in \cite{IS06} using renewal theoretic arguments. In the following lemma, we collect the most important facts for our considerations.

\begin{proposition}\label{prop:renewal}
Let $K=(\tau_n,\eta_n)_{n\in\N_0}$ be a constant boundary strategy with parameters $(a,\alpha, \beta,b)$.
\begin{enumerate}[(i)]
\item For all $\pi \in [a,b]$, it holds that
\[{\E^{K}_\pi({\tau_1})}=\frac{h_0(\pi)-h_0(a)}{h_0(b)-h_0(a)}(h_1(b)-h_1(a))+h_1(a)-h_1(\pi),\]
where
\[h_0(x)=\begin{cases}
-\left(\frac{1-x}{x}\right)^{2\pi^*-1}\;\;\;&\pi^*\not=\frac{1}{2}\\
\log\frac{1-x}{x}&\pi^*=\frac{1}{2},
\end{cases},\;\;\;\;h_1(x)=\begin{cases}
-\frac{2\log\frac{1-x}{x}}{\sigma^2(2\pi^*-1)}\;\;\;&\pi^*\not=\frac{1}{2}\\
\frac{1}{\sigma^2}\left(\log\frac{1-x}{x}\right)^{2}&\pi^*=\frac{1}{2}
\end{cases}
\]
{and $\pi^{*}$ denotes the Merton ratio.}
\item There exists a probability distribution $\nu$ on $\{\alpha,\beta\}$ (called the invariant distribution of $K$), such that
\[r_K=\frac{\E^{{K}}_\nu q(\pi_{{\tau_1}},\phi(\pi_{{\tau_1}}))}{\E^{K}_\nu {(\tau_1)}},\]
where $\phi$ is the function as before and
\[q(\pi,\eta)=\log\frac{1-\eta}{1-\pi}+\Gamma(\pi,\eta).\]
{More explicitly, $\nu$ is given by
\[\nu(\{\alpha\})=\frac{h_0(b)-h_0(\beta)}{h_0(\alpha)-h_0(a)+h_0(b)-h_0(\beta)},\;\nu(\{\beta\})=1-\nu(\{\alpha\}).\]}
\item \label{renewal:(iii)}For all\ $\pi\in[a,b]$
\[\lim_{T\rightarrow\infty} \frac{1}{T}\E^K_\pi(|\{n:\tau_n\leq T\}|)=\frac{1}{\E^{K}_\nu({\tau_1})}.\]
\item \label{renewal:(iv)}
\[{\E^{{K}}_\nu {(\tau_1)}}={p(h_1(a)-h_1(\alpha))+(1-p)(h_1(b)-h_1(\beta))},\]
where
\[p=\nu(\{\alpha\})=\frac{h_0(b)-h_0(\beta)}{h_0(\alpha)-h_0(a)+h_0(b)-h_0(\beta)}.\]
\end{enumerate}
\end{proposition}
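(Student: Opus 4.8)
The plan is to reduce the whole statement to three ingredients: the scale function of the risky-fraction diffusion, It\^o's formula applied to $\log(1-\pi_t)$, and the renewal structure of the post-trade values, glued together by the Markov renewal--reward theorem. For part~(i) I would start from the generator $\mathcal{L}=\tfrac12\sigma^2x^2(1-x)^2\partial_{xx}+x(1-x)(\mu-\sigma^2x)\partial_x$ of $(\pi_t)$ on $(0,1)$. Using the partial fraction decomposition $\tfrac{\mu-\sigma^2y}{y(1-y)}=\tfrac{\mu}{y}+\tfrac{\mu-\sigma^2}{1-y}$, a direct differentiation shows that $h_0$ is, up to an affine transformation, the scale function of $(\pi_t)$ (so $\mathcal{L}h_0=0$) and that $h_1$ is a particular solution of $\mathcal{L}h_1=1$. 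Since $[a,b]\subset(0,1)$ is compact and $h_0,h_1$ are $C^2$ and bounded on $[a,b]$, the process $h_0(\pi_{t\wedge\tau_1})$ is a bounded martingale and $h_1(\pi_{t\wedge\tau_1})-(t\wedge\tau_1)$ is a martingale; a uniform lower bound on $(h_0')^2\sigma^2x^2(1-x)^2$ over $[a,b]$ forces $\tau_1<\infty$ $\mathds{P}^K_\pi$-a.s.\ and $\E^K_\pi(\tau_1)<\infty$. Optional stopping then yields $\mathds{P}^K_\pi(\pi_{\tau_1}=b)=\tfrac{h_0(\pi)-h_0(a)}{h_0(b)-h_0(a)}$ (invariant under affine changes of the scale function) and $\E^K_\pi(\tau_1)=\E^K_\pi(h_1(\pi_{\tau_1}))-h_1(\pi)$; conditioning the latter on $\{\pi_{\tau_1}=a\}$ versus $\{\pi_{\tau_1}=b\}$ and inserting the exit probability gives the stated formula.

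Part~(ii) has two halves. First, since $(\pi_t)$ is continuous between trades and is reset to $\alpha$ or $\beta$ at each trade, one has $\pi_{\tau_n}\in\{a,b\}$ and $\eta_n=\phi(\pi_{\tau_n})\in\{\alpha,\beta\}$ for $n\ge1$, so $(\eta_n)_{n\ge1}$ is an irreducible two-state Markov chain whose transition probabilities are the exit probabilities from part~(i) (from $\alpha$ one exits through $a$, hence moves to $\alpha$, with probability $\tfrac{h_0(b)-h_0(\alpha)}{h_0(b)-h_0(a)}$, etc.); solving its balance equation produces the claimed $\nu$. Second, for the ratio representation of $r_K$ I would apply It\^o's formula to $\log(1-\pi_t)$, whose continuous part equals $-g(\pi_t)\,dt-\sigma\pi_t\,dW_t$ and whose jump at $\tau_n$ equals $\log\tfrac{1-\eta_n}{1-\pi_{\tau_n}}$. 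Substituting this into the representation of Proposition~\ref{repr_growth} rewrites $\int_0^Tg(\pi_t)\,dt+\sum_{n:\tau_n\le T}\Gamma(\pi_{\tau_n},\eta_n)$ as $\sum_{n:\tau_n\le T}q(\pi_{\tau_n},\eta_n)+\log(1-\pi_0)-\log(1-\pi_T)-\sigma\int_0^T\pi_t\,dW_t$. Because $\pi_t\in[a,b]$ for every $t>0$, the boundary terms are bounded and the stochastic integral is a true martingale of mean zero, so after dividing by $T$ only $\tfrac1T\E^K_\pi\big(\sum_{n:\tau_n\le T}q(\pi_{\tau_n},\eta_n)\big)$ survives in the limit. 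This is a Markov renewal--reward functional for the semi-Markov process driven by $(\eta_n)$ with inter-arrival times $\tau_{n+1}-\tau_n$ and per-transition reward $q(\pi_{\tau_{n+1}},\eta_{n+1})$; the ratio ergodic theorem (all per-transition rewards are bounded and $\E^K_\nu(\tau_1)<\infty$) shows that the $\liminf$ defining $r_K$ is in fact a limit, equal to $\E^K_\nu q(\pi_{\tau_1},\phi(\pi_{\tau_1}))/\E^K_\nu(\tau_1)$, and independent of $\pi$.

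Part~(iii) is the same Markov renewal statement with reward $\equiv1$ per transition, so the elementary renewal theorem for semi-Markov processes gives $\tfrac1T\E^K_\pi|\{n:\tau_n\le T\}|\to1/\E^K_\nu(\tau_1)$. Part~(iv) is then pure algebra: write $\E^K_\nu(\tau_1)=p\,\E^K_\alpha(\tau_1)+(1-p)\,\E^K_\beta(\tau_1)$ with $p=\nu(\{\alpha\})$, insert the formula of part~(i) at $\pi=\alpha$ and at $\pi=\beta$, and use the identity $p\,(h_0(\alpha)-h_0(a))=(1-p)\,(h_0(b)-h_0(\beta))$ (equivalent to $\nu$ being stationary) to reduce the coefficient of $h_1(b)-h_1(a)$ to $1-p$; collecting the remaining terms leaves exactly $p\,(h_1(a)-h_1(\alpha))+(1-p)\,(h_1(b)-h_1(\beta))$.

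The calculus identities $\mathcal{L}h_0=0$ and $\mathcal{L}h_1=1$ and the algebra of part~(iv) are routine; the real obstacle lies in the stochastic-analysis bookkeeping underpinning parts~(i)--(iii): verifying that the relevant local martingales are genuine martingales (uniform integrability on $[a,b]$ together with $\E^K_\nu(\tau_1)<\infty$), that the transient phase before the chain $(\eta_n)$ settles on $\{\alpha,\beta\}$ does not affect the limits, and that the convergence in the renewal--reward theorem is strong enough ($L^1$, not merely almost sure) to interchange limit and expectation and to upgrade the $\liminf$ in the definition of $r_K$ to an equality. For these renewal-theoretic details I would follow the analysis of constant boundary strategies in \cite{IS06}.
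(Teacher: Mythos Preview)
Your proposal is correct and is essentially the same approach as the paper's, since the paper's proof consists solely of citations to \cite{IS06} (Lemma~3, Theorem~C.2, and Corollary~1), and your sketch---scale function and particular solution of $\mathcal{L}h_1=1$ for~(i), the two-state chain on $\{\alpha,\beta\}$ together with It\^o's formula for $\log(1-\pi_t)$ and the Markov renewal--reward theorem for~(ii)--(iii), and the balance identity $p(h_0(\alpha)-h_0(a))=(1-p)(h_0(b)-h_0(\beta))$ for~(iv)---is exactly the content of those cited results. You even defer the renewal-theoretic fine points to \cite{IS06} yourself, so the two proofs coincide in substance; yours simply spells out what the citation contains.
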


\begin{proof}
All results can be found in \cite{IS06}: (i) is Lemma 3, (ii) is Theorem C.2, (iii) is proved in the last lines of the proof of Theorem C.2, and (iv) holds by ibid., Corollary 1.
\end{proof}

 \subsection{Verification Theorem}
We denote the infinitesimal operator of the uncontrolled process $(\pi_t)_{t\geq0}$ by $L$, i.e.
\[Lv(\pi)=\mu(\pi)v'(\pi)+\frac{1}{2}\sigma^2(\pi)v''(\pi)\;\;\;{\mbox{ for $\pi\in(0,1),v\in C^2$ around $\pi$}},\]
where $\mu(\pi)=\pi(1-\pi)(\mu-\sigma^2\pi)$ and $\sigma(\pi)=\sigma\pi(1-\pi).$
\begin{theorem}\label{thm:verification}
Let $h>0$.
\begin{enumerate}[(a)]
\item \label{ver_a} Let $v:(0,1)\rightarrow\R$, $\lambda\in\R,c\geq 0$ such that
\begin{enumerate}[(i)]
\item $v$ is $C^1$, and piecewise $C^2$,
\item $v(\pi')-v(\pi)+\Gamma(\pi,\pi')-c \leq 0$ for all  $\pi,\pi'\in (0,1)$,
\item $Lv(\pi)+g(\pi)-\lambda\leq 0$ for all $\pi\in(0,1)$,
\item $M:=\left(\int_0^t\sigma(\pi_s)v'(\pi_s)dW_s\right)_{t\geq 0}$ is a martingale.
\end{enumerate}
Then
\[V^{o}(h):=\sup_{K\in \mathcal{K}_h}r_K\leq \lambda+c/h.\]
\item\label{ver_b} If there furthermore exist $0<a=a_h< \alpha=\alpha_h\leq\beta=\beta_h< b=b_h<1$ such that
\begin{enumerate}[(i)]
\setcounter{enumii}{4}
\item $Lv(\pi)+g(\pi)-\lambda=0$ for all $\pi\in(a,b)$,
\item $v(\alpha)-v(a)+\Gamma(a,\alpha)-c= 0=v(\beta)-v(b)+\Gamma(b,\beta)-c$,
\item\label{renewal} For the constant boundary strategy $K=(\tau_n,\eta_n)_{n}$ with parameters $(a,\alpha,\beta,b)$ it holds that
\[\liminf_{T\rightarrow\infty} \frac{1}{T}\E^K(|\{n:\tau_n\leq T\}|)= \frac{1}{h},\]
\end{enumerate}
then
\[V^{o}(h)=\lambda+\frac{c}{h},\]
and the constant boundary strategy with parameters $(a,\alpha,\beta,b)$ is optimal.
\end{enumerate}
\end{theorem}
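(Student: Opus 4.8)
The plan is to prove part \eqref{ver_a} by a supermartingale/dynamic-programming argument and then deduce part \eqref{ver_b} by showing that the constant boundary strategy makes all the inequalities tight. For part \eqref{ver_a}, fix $K=(\tau_n,\eta_n)_n\in\mathcal{K}_h$ and let $(\pi_t)$ be the associated controlled process started at some $\pi$. Apply It\^o's formula to $v(\pi_t)$ on each interval $[\tau_n,\tau_{n+1})$ where the process runs uncontrolled; at each trading time $\tau_n$ the process jumps from $\pi_{\tau_n-}$ to $\eta_n$, producing a jump $v(\eta_n)-v(\pi_{\tau_n-})$ in $v(\pi_\cdot)$. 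Summing over a horizon $[0,T]$ and using the martingale property (iv) of $M$, one obtains
\[
\E^K_\pi\big(v(\pi_T)\big)=v(\pi)+\E^K_\pi\!\left(\int_0^T Lv(\pi_s)\,ds+\sum_{n:\tau_n\le T}\big(v(\eta_n)-v(\pi_{\tau_n-})\big)\right).
\]
Now insert $g$ and $\Gamma$: add and subtract $\int_0^T g(\pi_s)\,ds$ and $\sum_{n:\tau_n\le T}\Gamma(\pi_{\tau_n-},\eta_n)$. By (iii), $\int_0^T(Lv(\pi_s)+g(\pi_s)-\lambda)\,ds\le 0$, and by (ii), each summand $v(\eta_n)-v(\pi_{\tau_n-})+\Gamma(\pi_{\tau_n-},\eta_n)-c\le 0$. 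This yields
\[
\E^K_\pi\big(v(\pi_T)\big)\le v(\pi)+\E^K_\pi\!\left(\int_0^T g(\pi_s)\,ds+\sum_{n:\tau_n\le T}\Gamma(\pi_{\tau_n-},\eta_n)\right)-\lambda T+c\,\E^K_\pi\big(|\{n:\tau_n\le T\}|\big).
\]
Divide by $T$, take $\liminf_{T\to\infty}$, and use that $v$ is bounded on compacts together with the fact that the growth-rate functional in Proposition~\ref{repr_growth} is finite to argue $\frac1T\E^K_\pi(v(\pi_T))\to 0$ (this boundedness/integrability point is the technical step that needs care); combining with the constraint \eqref{eq:strategies} defining $\mathcal{K}_h$, namely $\limsup_T \frac1T\E^K_\pi(|\{n:\tau_n\le T\}|)\le 1/h$, and the representation of $r_K$ from Proposition~\ref{repr_growth} gives $r_K\le \lambda+c/h$. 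Taking the supremum over $K\in\mathcal{K}_h$ proves \eqref{ver_a}.

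For part \eqref{ver_b}, I would show the constant boundary strategy $K^*$ with parameters $(a,\alpha,\beta,b)$ attains the bound. Retrace the computation above for $K=K^*$: by (v) the time-integral term contributes exactly zero (the uncontrolled process stays in $(a,b)$ where $Lv+g-\lambda=0$), and by (vi) each jump contributes exactly $c$ rather than $\le c$, since the process hits $a$ and jumps to $\alpha$ or hits $b$ and jumps to $\beta$. Hence every inequality becomes an equality and
\[
\E^{K^*}_\pi\big(v(\pi_T)\big)=v(\pi)+\E^{K^*}_\pi\!\left(\int_0^T g(\pi_s)\,ds+\sum_{n:\tau_n\le T}\Gamma(\pi_{\tau_n-},\eta_n)\right)-\lambda T+c\,\E^{K^*}_\pi\big(|\{n:\tau_n\le T\}|\big).
\]
Dividing by $T$, letting $T\to\infty$, using again $\frac1T\E^{K^*}_\pi(v(\pi_T))\to 0$ (now justified concretely by Proposition~\ref{prop:renewal}, since $\pi_T$ lives in the compact $[a,b]$ in the long run), and invoking Proposition~\ref{prop:renewal}\eqref{renewal:(iii)} together with hypothesis \eqref{renewal} that $\frac1T\E^{K^*}(|\{n:\tau_n\le T\}|)\to 1/h$, we get $r_{K^*}=\lambda+c/h$. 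Since \eqref{renewal} also certifies $K^*\in\mathcal{K}_h$, this matches the upper bound from \eqref{ver_a}, so $V^o(h)=\lambda+c/h$ and $K^*$ is optimal.

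The main obstacle I anticipate is the transversality/integrability argument $\frac1T\E^K_\pi(v(\pi_T))\to 0$ in part \eqref{ver_a} for a general admissible $K$: unlike for the constant boundary strategy, $\pi_T$ need not stay in a compact set, and $v$ may be unbounded near $0$ and $1$. One should either impose or derive enough control on the tails of $\pi_T$ — e.g. bound $v$ by an affine or logarithmic function of the state variables and use the explicit wealth dynamics together with the finiteness of $r_K$ — or reformulate the claim using a $\liminf$/$\limsup$ comparison that sidesteps the limit of the boundary term (taking $\liminf_T$ on the left, where $v(\pi_T)$ enters with a sign that is harmless). A secondary, more routine, point is verifying that the jump term $v(\eta_n)-v(\pi_{\tau_n-})$ produced by It\^o's formula under impulses is exactly what appears in (ii); this is standard for impulse control but should be stated, referencing the setup in \cite{IS06_2}.
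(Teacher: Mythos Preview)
Your proposal follows essentially the same route as the paper's proof: apply It\^o's formula to $v(\pi_\cdot)$ over $[0,T]$ to obtain the decomposition
\[
\int_0^T g(\pi_t)\,dt+\sum_{n:\tau_n\le T}\Gamma(\pi_{\tau_n},\eta_n)=v(\pi_0)-v(\pi_{T+})+\int_0^T(Lv+g-\lambda)\,dt+M_T+\sum(\Delta v+\Gamma)+\lambda T,
\]
use (iii) and (ii) to bound the two middle terms by $0$ and $c\,|\{n:\tau_n\le T\}|$ respectively, divide by $T$, pass to the limit, and for part \eqref{ver_b} observe that (v)--(vii) turn each inequality into an equality. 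The only notable difference is that the paper works pathwise and then takes expectations, while you take expectations first; the substance is identical.

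Your flagged obstacle --- controlling $\tfrac{1}{T}\E^K_\pi(v(\pi_T))$ for general $K\in\mathcal{K}_h$ --- is exactly the point the paper handles by the single word ``Hence'', so you are not missing anything relative to the paper's own argument; in fact you have been more careful in isolating where the proof is brief.
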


\begin{proof}
Applying It\^o's formula for jump diffusion processes yields
\[v(\pi_{T+})=v(\pi_0)+\int_0^TLv(\pi_t)dt+M_T+\sum_{t\leq T}(v(\pi_{t+})-v(\pi_t))\]
for each $T>0$, where $M = (M_t)_{t\ge 0}$ is a martingale. {Therefore, }for each ${K=(\tau_n,\eta_n)_{n\in\N_0}}\in\mathcal{K}_h$ and each $T$ we have

\begin{align*}
\int_0^Tg(\pi_t)dt+\sum_{n:\tau_n\leq T}\Gamma(\pi_{\tau_n},\eta_n)=& v(\pi_0)-v(\pi_{T+})+\int_0^T(Lv(\pi_t)+g(\pi_t)-\lambda) dt+M_T\\
& +\sum_{t\leq T}(v(\pi_{t+})-v(\pi_t)+\Gamma(\pi_{t},\pi_{t+}))+\lambda T\\
\leq&\;\; v(\pi_0)-v(\pi_{T+})+M_T\\
& +\sum_{t\leq T}(v(\pi_{t+})-v(\pi_t)+\Gamma(\pi_{t},\pi_{t+}))+\lambda T\\
\leq& v(\pi_0)-v(\pi_{T+})+c|\{n:\tau_n\leq T\}|+M_T+\lambda T.
\end{align*}
Hence,
\[\liminf_{T\rightarrow\infty}\frac{1}{T}\E^{K}_\pi\left(\int_0^Tg(\pi_t)dt+\sum_{n:\tau_n\leq T}\Gamma(\pi_{\tau_n},\eta_n)\right)\leq \lambda+c/h.\]
Using Proposition \ref{repr_growth}, this proves \eqref{ver_a}. For \eqref{ver_b} note that under the stated assumption for the constant boundary strategy $K$ with parameters $(a,\alpha,\beta,b)$ equality holds in each step.
\end{proof}

\begin{remark}
Obviously, the forgoing proof had nothing to do with the special situation we are faced with. Therefore, the previous Verification Theorem is applicable {for a large class of problems with an underlying }It\^o diffusion on an interval.
\end{remark}

\subsection{Existence of a solution}
Now, we find an optimal solution as indicated in Theorem \ref{thm:verification}. In the following, our standing assumption is that the Merton ratio $\pi^*\in(0,1)$. In a first step, we fix a constant $c>0$ and find constants such that conditions $(i)$-$(vi)$ in Theorem \ref{thm:verification} are fulfilled. These constants can be found by standard arguments involving the solution to ODEs. As presented in the following Lemma, we can shorten this {technical} discussion, since we can exactly follow the arguments given in \cite{IS06_2}:
\begin{lemma}\label{lem:fixed_costs}
For a fixed $c>0$, there exists $v_c:(0,1)\rightarrow\R$, $\lambda_c\in\R$, and $0<a_c< \alpha_c\leq\beta_c< b_c<1$ such that assumptions $(i)$-$(vi)$ in Theorem \ref{thm:verification} hold true. The constants $\lambda_c\in\R$, and $0<a_c< \alpha_c\leq\beta_c< b_c<1$ are the unique solution to the following system of equations:
\begin{align*}
k(b)&=-\frac{\gamma}{1-\gamma b}\\
k({a})&=\frac{\gamma}{1+\gamma a}\\
k(\beta)&=-\frac{\gamma}{1-\gamma \beta}\\
k(\alpha)&=\frac{\gamma}{1+\gamma \alpha}\\
\int_\beta^bk(x)dx&=\Gamma(b,\beta)+\log(1-\delta_c)\\
\int_\alpha^ak(x)dx&=\Gamma(a,\alpha)+\log(1-\delta_c)\\
\end{align*}
where {$\delta_c:=\delta:=1-e^{-c}\in(0,1)$, and}
$k(x)=k_c(x)$ is the function given in \cite[Formula (6.16)]{IS06_2}, that continuously depend on c.
\end{lemma}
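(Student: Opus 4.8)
The plan is to reduce Lemma \ref{lem:fixed_costs} essentially entirely to the analysis already carried out in \cite{IS06_2}, observing that our problem with pure proportional transaction costs plus a fixed cost $c>0$ is structurally the combined fixed-and-proportional cost problem treated there, once one identifies $\delta_c=1-e^{-c}$ as the ``fixed cost'' parameter. Concretely, recall that in the fixed-plus-proportional model the fixed cost enters the jump term as an additive $\log(1-\delta)$; in our formulation the fixed cost appears as the constant $c$ subtracted in condition (ii), and since the jump inequality reads $v(\pi')-v(\pi)+\Gamma(\pi,\pi')-c\le 0$, writing $c=-\log(1-\delta_c)$ makes the two conditions literally coincide. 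Thus the candidate value function $v_c$, the Lagrange multiplier $\lambda_c$, and the boundary parameters $(a_c,\alpha_c,\beta_c,b_c)$ can be taken to be exactly those constructed in \cite{IS06_2}, and conditions (i)--(vi) of Theorem \ref{thm:verification} are precisely the verification conditions established there (smooth fit, the HJB equality on the no-trade region, the inequality outside, and the martingale property, the latter following from boundedness of $\sigma(\pi)v_c'(\pi)$ on the relevant region).

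First I would set $\delta_c:=1-e^{-c}\in(0,1)$ and quote from \cite{IS06_2} the existence of a $C^1$, piecewise $C^2$ function $v_c$ together with $\lambda_c$ and parameters $0<a_c<\alpha_c\le\beta_c<b_c<1$ solving the corresponding quasi-variational inequality with fixed cost $\delta_c$; this gives (i), (iii), (v) and the martingale condition (iv) directly. Then I would check (ii) and (vi): these are the statement that $v_c(\pi')-v_c(\pi)+\Gamma(\pi,\pi')\le c$ with equality at the four boundary configurations $(a_c,\alpha_c)$ and $(b_c,\beta_c)$ — i.e. exactly the ``$\mathcal M v = v$'' intervention inequality of \cite{IS06_2} after substituting $c=-\log(1-\delta_c)$. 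Next I would record that in \cite{IS06_2} the free boundaries are characterized as the unique solution of a system of equations obtained by integrating the relation $v_c'(x)=k_c(x)$ (where $k_c$ is the explicit function of \cite[Formula (6.16)]{IS06_2}) against the smooth-fit and value-matching conditions; spelling this out yields exactly the displayed six equations $k(b)=-\gamma/(1-\gamma b)$, etc., together with the two integral equations involving $\Gamma$ and $\log(1-\delta_c)$.

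The only genuinely new point relative to \cite{IS06_2}, and hence the main thing to be careful about, is \textbf{uniqueness} of the tuple $(\lambda_c,a_c,\alpha_c,\beta_c,b_c)$ as a solution of the stated system, and the \emph{continuous dependence} of $k_c$ (and through it the boundaries) on $c$, which will be needed later when we optimize over $c$ to match the constraint $1/h$. For uniqueness I would argue as in \cite{IS06_2}: the first four equations pin down $a_c,\alpha_c,\beta_c,b_c$ as the roots of $k_c(x)=\pm\gamma/(1\mp\gamma x)$ once $\lambda_c$ is fixed (these are monotone transversal intersections, so locally unique), and the two integral equations then determine $\lambda_c$ through a strictly monotone scalar relation; an implicit-function-theorem / monotonicity argument closes it. For the continuity in $c$, I would note that $k_c$ depends on $\lambda_c$ and on $c$ only through smooth expressions and that the defining system has non-degenerate Jacobian, so the implicit function theorem gives $c\mapsto(\lambda_c,a_c,\alpha_c,\beta_c,b_c)$ continuous (indeed differentiable). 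I expect the bookkeeping of translating \cite{IS06_2}'s $\delta$-parametrization into our $c$-parametrization — and making sure the sign conventions in $\Gamma$ and in $k_c$ line up — to be the most error-prone step, but it is routine rather than deep.
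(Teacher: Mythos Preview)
Your proposal is essentially the same approach as the paper's: both reduce the lemma to the analysis in \cite{IS06_2} by defining $\delta_c:=1-e^{-c}$ and recognizing that the modified cost $\Gamma_c(x,y)=\Gamma(x,y)-c$ is the fixed-plus-proportional cost structure treated there, so that their construction of $v_c,\lambda_c$ and the boundary parameters, together with their verification argument, yields conditions (i)--(vi) and the displayed system of equations. One small correction: the cost functional in \cite{IS06_2} is not literally $\Gamma_c$ but rather $\Gamma_{IS}(x,y)=\log\frac{1-\delta\mp\gamma x}{1\mp\gamma y}$, which differs from $\Gamma_c$ in how $\delta$ enters the numerator; the paper notes this explicitly and simply asserts that the Section~6--7 argument of \cite{IS06_2} carries over verbatim to the modified cost, so your claim that the two problems ``literally coincide'' should be softened accordingly.
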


\begin{proof}
We define the new cost function $\Gamma_c$ by
\[\Gamma_c(x,y)=\Gamma(x,y)-c=\begin{cases}
 \log\frac{1-\gamma x}{1-\gamma y}+\log(1-\delta_c),\;\;\;&y<x,\\
  \log\frac{1+\gamma x}{1+\gamma y}+\log(1-\delta_c),\;\;\;&y\geq x.
 \end{cases},\]
where $\delta_c:=\delta:=1-e^{-c}\in(0,1)$ and consider the impulse control problem given by
\begin{equation}\label{eq:modified}
r_c:=\sup_{K\in\mathcal{K}}\liminf_{t\rightarrow\infty}\frac{1}{T}\E^{K}_\pi\left(\int_0^Tg(\pi_t)dt+\sum_{n:\tau_n\leq T}\Gamma_c(\pi_{\tau_n},\eta_n)\right).
\end{equation}
This is a slight modification of the problem discussed in \cite[Section 5-7]{IS06_2}; the cost structure discussed there was a bit more difficult, to be precise the problem was solved for
\[\Gamma_{IS}(x,y)=
\begin{cases}
 \log\frac{1-\delta-\gamma x}{1-\gamma y},\;\;\;&y<x,\\
  \log\frac{1-\delta+\gamma x}{1+\gamma y},\;\;\;&y\geq x.
 \end{cases}\]
 Nonetheless, the discussion given there immediately applies to this cost structure too, and one obtains the given result by following their line of argument in Section 6-7. This leads to our result.
\end{proof}

\begin{remark}\label{anmerk:tamura}
The problem \eqref{eq:modified} was furthermore studied in \cite{T06} and \cite{T08} (with exactly this cost structure) using quasi-variational inequality techniques. This approach could also be used {to obtain Lemma \ref{lem:fixed_costs} by a slight extension of the arguments given in \cite[Chapter 3]{L}.}
\end{remark}

As a second step, we now consider for each $c$, the optimal impulse control strategy $K_c=(\tau_{c,n},\eta_{c,n})_{n\in\N}$ with parameters $(a_c, \alpha_c,\beta_c, b_c)$ as given in Lemma \ref{lem:fixed_costs}. To fulfill condition (vii) in Theorem \ref{thm:verification}, i.e.
\[\liminf_{T\rightarrow\infty} \frac{1}{T}\E^{K_c}_\pi(|\{n:\tau_n\leq T\}|)= \frac{1}{\mathds{E}^{K_c}_{\nu_c} (\tau_{c,1})}=\frac{1}{h}\]
for a given $h>0$, we want to utilize Proposition \ref{prop:renewal} (iii) and consider {$h =\E^{K_c}_{\nu_c}(\tau_{c,1})$}, where $\nu_c$ denotes the invariant distribution associated to $K_c$ as defined in Proposition \ref{prop:renewal}.
\begin{lemma}\label{lem:h_c}
For  each $h>0$ there exists $c>0$ such that {$h=\E^{K_c}_{\nu_c}(\tau_{c,1})$}.
\end{lemma}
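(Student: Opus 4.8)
The goal is to show surjectivity of the map $c \mapsto \E^{K_c}_{\nu_c}(\tau_{c,1})$ from $(0,\infty)$ onto $(0,\infty)$. Since by Lemma~\ref{lem:fixed_costs} the parameters $(a_c,\alpha_c,\beta_c,b_c)$ depend continuously on $c$, and since $\E^{K_c}_{\nu_c}(\tau_{c,1})$ is given by the explicit formula in Proposition~\ref{prop:renewal}(iv), namely $\E^{K_c}_{\nu_c}(\tau_{c,1}) = p_c(h_1(a_c)-h_1(\alpha_c))+(1-p_c)(h_1(b_c)-h_1(\beta_c))$ with $p_c=\nu_c(\{\alpha_c\})$ also given explicitly via $h_0$, the function $c\mapsto \E^{K_c}_{\nu_c}(\tau_{c,1})$ is continuous on $(0,\infty)$. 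The statement then follows from the intermediate value theorem, provided I can establish the two limiting behaviors
\[
\lim_{c\downarrow 0}\E^{K_c}_{\nu_c}(\tau_{c,1})=0,\qquad \lim_{c\to\infty}\E^{K_c}_{\nu_c}(\tau_{c,1})=\infty.
\]

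\textbf{The limit $c\downarrow 0$.} As $c\downarrow 0$ we have $\delta_c=1-e^{-c}\downarrow 0$, so $\log(1-\delta_c)\to 0$. In this regime the modified impulse problem \eqref{eq:modified} degenerates towards the pure proportional-cost Merton problem of Section~\ref{sec:merton}, whose optimal policy is the reflecting (continuous-trading) strategy with no-transaction region $(a,b)$. Concretely, I expect the no-transaction interval $(a_c,b_c)$ to shrink and the return points $\alpha_c,\beta_c$ to converge to the respective boundaries, i.e. $b_c-\beta_c\to 0$ and $\alpha_c-a_c\to 0$ (and in the case $\gamma=0$, all four parameters collapse to the Merton ratio $\pi^*$). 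Since $h_1$ is continuous on $(0,1)$, each of the differences $h_1(a_c)-h_1(\alpha_c)$ and $h_1(b_c)-h_1(\beta_c)$ then tends to $0$, and because $p_c\in[0,1]$ the convex combination giving $\E^{K_c}_{\nu_c}(\tau_{c,1})$ also tends to $0$. Making the convergence $b_c-\beta_c,\alpha_c-a_c\to 0$ rigorous from the system of equations in Lemma~\ref{lem:fixed_costs} is the first technical point: one reads off from $\int_\beta^b k(x)\,dx=\Gamma(b,\beta)+\log(1-\delta_c)$ that as the right-hand side $\to 0$ and $\Gamma(b,\beta)\ge 0$ shrinks, the integral of $k$ over $[\beta,b]$ vanishes; combined with the boundary conditions $k(b)=-\gamma/(1-\gamma b)$, $k(\beta)=-\gamma/(1-\gamma\beta)$ pinning $k$'s values, this forces $b_c-\beta_c\to 0$ (and symmetrically on the lower side). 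I can either invoke the known convergence of the fixed-plus-proportional cost solution to the pure proportional one as fixed costs vanish — which is classical and discussed in the references \cite{IS06,IS06_2} — or argue directly from these equations.

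\textbf{The limit $c\to\infty$.} As $c\to\infty$, $\log(1-\delta_c)\to-\infty$, i.e. each trade carries a huge penalty, so the optimizer trades very rarely and the no-transaction region $(a_c,b_c)$ expands towards $(0,1)$; correspondingly $\E^{K_c}_{\nu_c}(\tau_{c,1})$, the expected time between trades under the invariant law, blows up. Quantitatively, from $\int_\beta^b k(x)\,dx=\Gamma(b,\beta)+\log(1-\delta_c)\to-\infty$ and the uniform boundedness of $\Gamma$, the integral $\int_{\beta_c}^{b_c}k(x)\,dx\to-\infty$; since $k$ is bounded on compact subsets of $(0,1)$, this is only possible if $b_c\to 1$ (and similarly $a_c\to 0$), hence also $\beta_c\to 1$, $\alpha_c\to 0$. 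Because $h_1(x)=-2\log\frac{1-x}{x}/(\sigma^2(2\pi^*-1))$ (resp. its $\pi^*=1/2$ analogue) diverges as $x\to 0$ or $x\to 1$, the differences $h_1(a_c)-h_1(\alpha_c)$ and $h_1(b_c)-h_1(\beta_c)$ blow up; one must check they blow up \emph{with the right sign} so that no cancellation occurs in the convex combination — this is where I expect the main obstacle. Here $\E^{K_c}_{\nu_c}(\tau_{c,1})=\E^{K_c}_\nu(\tau_1)$ is genuinely an expected exit time, hence automatically nonnegative, so the combination cannot cancel; the clean way to argue divergence is therefore to lower-bound the expected exit time directly — e.g. the time for the diffusion $(\pi_t)$ started at a fixed interior point to leave $(a_c,b_c)$ is nondecreasing as the interval grows, and tends to $+\infty$ as $(a_c,b_c)\uparrow(0,1)$ since the scale function of $(\pi_t)$ has infinite limits at the endpoints. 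Feeding $a_c\to0,b_c\to1$ into this monotonicity gives $\E^{K_c}_{\nu_c}(\tau_{c,1})\to\infty$ without any sign bookkeeping. With both limits in hand and continuity established, the intermediate value theorem delivers, for every $h>0$, a $c>0$ with $h=\E^{K_c}_{\nu_c}(\tau_{c,1})$, completing the proof.
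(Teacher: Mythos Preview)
Your overall architecture --- continuity of $c\mapsto \E^{K_c}_{\nu_c}(\tau_{c,1})$ plus the two boundary limits plus the intermediate value theorem --- matches the paper exactly. The differences, and the gaps, are in how you establish the two limits.

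\textbf{The limit $c\to\infty$.} Your final argument lower-bounds $\E^{K_c}_{\nu_c}(\tau_{c,1})$ by the exit time of $(\pi_t)$ from $(a_c,b_c)$ started at a \emph{fixed} interior point, and then uses $(a_c,b_c)\uparrow(0,1)$. But the expectation in question is taken under $\nu_c$, which is supported on $\{\alpha_c,\beta_c\}$, and you have not shown that $\alpha_c,\beta_c$ stay bounded away from $0$ and $1$ as $c\to\infty$; on the contrary, you yourself write ``hence also $\beta_c\to1$, $\alpha_c\to0$''. If the starting points drift to the boundaries along with $a_c,b_c$, monotonicity in the interval gives you nothing. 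Your earlier step, deducing $b_c\to1$ from $\int_{\beta_c}^{b_c}k\to-\infty$ and ``$k$ bounded on compacta'', has the same weakness: you have not ruled out that the divergence comes from $\beta_c$ moving rather than $b_c$. The paper sidesteps all of this with a one-line growth-rate bound: since $\Gamma\le 0$, one has
\[
r_c\;\le\;\|g\|_\infty+\log(1-\delta_c)\,\frac{1}{\E^{K_c}_{\nu_c}(\tau_{c,1})}\;=\;\|g\|_\infty-\frac{c}{\E^{K_c}_{\nu_c}(\tau_{c,1})},
\]
and since $r_c$ is bounded below (e.g.\ by the value of any fixed admissible strategy), $\E^{K_c}_{\nu_c}(\tau_{c,1})\to\infty$ follows immediately. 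This is both shorter and avoids any boundary analysis of $(a_c,\alpha_c,\beta_c,b_c)$.

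\textbf{The limit $c\downarrow 0$.} Here your outline is right but the sketch has slips: $\Gamma(b,\beta)=\log\frac{1-\gamma b}{1-\gamma\beta}\le 0$ for $\beta<b$, not $\ge 0$; and the claim that the right-hand side of the integral equation tends to $0$ presupposes $b_c-\beta_c\to0$, which is what you are trying to prove. The paper does not argue from the equations at all; it simply invokes the convergence result of \cite{CIL}, which gives $a_{c_n},\alpha_{c_n}\to A$ and $b_{c_n},\beta_{c_n}\to B$ for some $A,B\in(0,1)$ along a null sequence, and then bounds $\E^{K_{c_n}}_{\nu_{c_n}}(\tau_{c_n,1})\le\max\{|h_1(a_{c_n})-h_1(\alpha_{c_n})|,|h_1(b_{c_n})-h_1(\beta_{c_n})|\}\to 0$ by continuity of $h_1$ on $(0,1)$.
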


\begin{proof}
First, we show that $\E^{K_c}_{\nu_c}(\tau_{c,1})$ attains arbitrarily large values: Note that for $r_{c}$ as in \eqref{eq:modified} we have
\[r_{c}\leq \|{g}\|_\infty+\log(1-\delta_c)\lim_{T\rightarrow\infty}\frac{1}{T}\E^{K_c}_\pi(|\{n:\tau_{n,c}\leq T\}|)=\|{g}\|_\infty+\log(1-\delta_c)\frac{1}{\E^{K_c}_{\nu_c}(\tau_{c,1})},\]
where we used Proposition \ref{prop:renewal} (iii) in the last step.
Since $r_c\in \R_{+}$ and $\log(1-\delta_c)=-c\rightarrow-\infty$ for $c\rightarrow\infty$, we must have $\E^{K_c}_{\nu_c}(\tau_{c,1})\rightarrow \infty$ for $c\rightarrow\infty$.\\
We furthermore show that $\E^{K_c}_{\nu_c}(\tau_{c,1})$ attains arbitrarily small values: By \cite{CIL}, there exist $A,B\in(0,1)$ such that $a_{c_n},\alpha_{c_n}\rightarrow A$ and $b_{c_n},\beta_{c_n}\rightarrow B$ for some null sequence $(c_n)_{n\in\N}$. (Note that the assumption $\gamma>0$ in the above reference is indeed not relevant for this fact.)
Using the representation in Proposition \ref{prop:renewal} \eqref{renewal:(iv)}, we obtain that this implies
\[\E^{K_{c_n}}_{\nu_{c_n}}(\tau_{c,1})\leq \max\{|h_1(a_{c_n})-h_1(\alpha_{c_n})|,|h_1(b_{c_n})-h_1(\beta_{c_n})|\}\rightarrow 0\;\;\mbox{ as $n\rightarrow\infty$}\]
since $h_1$ is a continuous function {on $(0,1)$}.\\
By the implicit functions theorem applied to the system of equations given in Lemma \ref{lem:fixed_costs}, we furthermore obtain that $c\mapsto(a_c, \alpha_c,\beta_c, b_c)$ is a continuous function, and - using Proposition \ref{prop:renewal} again - we obtain that $c\mapsto \E^{K_c}_{\nu_c}(\tau_{c,1})$ is also continuous. \\
Putting pieces together, the intermediate value theorem yields the desired result.
\end{proof}

Now, we obtain the existence result:
\begin{proposition}\label{prop:existence}
Let $h>0$. Then there exist $v_h:(0,1)\rightarrow\R$, $\lambda_h\in\R$, and $0<a_h< \alpha_h\leq\beta_h< b_h<1$ such that assumptions $(i)$-$(vii)$ in Theorem \ref{thm:verification} hold.
\end{proposition}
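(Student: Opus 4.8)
The plan is to assemble Proposition \ref{prop:existence} directly from the two preceding lemmas, so almost no new work is required. First I would invoke Lemma \ref{lem:h_c}: given $h>0$, it produces a constant $c=c(h)>0$ with $h=\E^{K_c}_{\nu_c}(\tau_{c,1})$. This is the crucial link, because it is exactly the value of $c$ for which the constant boundary strategy associated to the fixed-cost problem will trade, on average, $1/h$ times per unit time.

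With this $c$ in hand, I would apply Lemma \ref{lem:fixed_costs} to obtain $v_c:(0,1)\to\R$, $\lambda_c\in\R$, and parameters $0<a_c<\alpha_c\leq\beta_c<b_c<1$ satisfying conditions $(i)$–$(vi)$ of Theorem \ref{thm:verification}. Then I would simply set $v_h:=v_c$, $\lambda_h:=\lambda_c$, and $(a_h,\alpha_h,\beta_h,b_h):=(a_c,\alpha_c,\beta_c,b_c)$. Conditions $(i)$–$(vi)$ hold immediately by Lemma \ref{lem:fixed_costs}, so the only remaining point is condition $(vii)$: that the constant boundary strategy $K=(\tau_n,\eta_n)_n$ with these parameters satisfies
\[\liminf_{T\to\infty}\frac{1}{T}\E^K(|\{n:\tau_n\leq T\}|)=\frac{1}{h}.\]
By Proposition \ref{prop:renewal}\,\eqref{renewal:(iii)}, the left-hand side equals $1/\E^K_\nu(\tau_1)=1/\E^{K_c}_{\nu_c}(\tau_{c,1})$, and by the choice of $c$ from Lemma \ref{lem:h_c} this is precisely $1/h$. (Here I should note that the limit in Proposition \ref{prop:renewal}\,\eqref{renewal:(iii)} exists, so the $\liminf$ in $(vii)$ coincides with it.) This closes the argument.

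The genuine mathematical content has therefore already been discharged in Lemmas \ref{lem:fixed_costs} and \ref{lem:h_c}, and the only thing that could go wrong here is a bookkeeping mismatch: one must check that the invariant distribution $\nu_c$ and the quantity $\E^{K_c}_{\nu_c}(\tau_{c,1})$ appearing in Lemma \ref{lem:h_c} are computed for the \emph{same} constant boundary strategy whose boundaries $(a_c,\alpha_c,\beta_c,b_c)$ solve the system in Lemma \ref{lem:fixed_costs} — which is indeed how $K_c$ was defined in the discussion preceding Lemma \ref{lem:h_c}. So the main (and only mild) obstacle is notational consistency across the chain of lemmas, not any substantive estimate. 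I would write the proof in three short sentences: pick $c$ via Lemma \ref{lem:h_c}, pull the objects from Lemma \ref{lem:fixed_costs}, and verify $(vii)$ via Proposition \ref{prop:renewal}\,\eqref{renewal:(iii)}.
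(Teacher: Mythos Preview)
Your proposal is correct and follows exactly the same route as the paper's proof: pick $c$ via Lemma~\ref{lem:h_c}, take the objects from Lemma~\ref{lem:fixed_costs} to secure $(i)$--$(vi)$, and then use Proposition~\ref{prop:renewal}\,\eqref{renewal:(iii)} to verify $(vii)$. Your additional remarks about the $\liminf$ versus limit and the notational consistency of $K_c$ are accurate clarifications that the paper leaves implicit.
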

\begin{proof}
By Lemma \ref{lem:h_c}, there exists {$c>0$} such that {$h=\E^{K_c}_{\nu_c}(\tau_{c,1})$}, where we have chosen the corresponding parameters according to Lemma \ref{lem:fixed_costs}, that satisfy $(i)$-$(vi)$ in Theorem \ref{thm:verification}. By Proposition \ref{prop:renewal} \eqref{renewal:(iii)} we see that {condition} (vii) of Theorem \ref{thm:verification} also holds true.
\end{proof}

\begin{theorem}
For each $h>0$ there exist parameters $a,\alpha,\beta,b\in(0,1)$ and $c>0,\lambda\in\R$ such that the constant boundary strategy with parameters $(a,\alpha,\beta,b)$ is optimal for \eqref{eq:value}, and
\[V^{o}(h)=\lambda+\frac{c}{h}.\]
\end{theorem}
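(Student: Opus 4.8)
The plan is essentially to assemble the pieces already in place: Proposition~\ref{prop:existence} and the Verification Theorem (Theorem~\ref{thm:verification}). So the proof is short, and I describe the assembly.

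\textbf{Outline of the argument.} Fix $h>0$. First I would invoke Proposition~\ref{prop:existence} to obtain a function $v=v_h:(0,1)\rightarrow\R$, a constant $\lambda=\lambda_h\in\R$, and parameters $0<a=a_h<\alpha=\alpha_h\leq\beta=\beta_h<b=b_h<1$ — together with the associated $c=c_h>0$ produced in Lemma~\ref{lem:h_c} — for which assumptions $(i)$--$(vii)$ of Theorem~\ref{thm:verification} are satisfied. The only point needing a word of care is assumption $(iv)$, the martingale property of $M=\bigl(\int_0^t\sigma(\pi_s)v'(\pi_s)\,dW_s\bigr)_{t\geq0}$: since $v$ is $C^1$ on $(0,1)$ and $\sigma(\pi)=\sigma\pi(1-\pi)$ vanishes at the endpoints, the integrand $\sigma(\pi_s)v'(\pi_s)$ is bounded on the relevant range $[a,b]$ for the controlled process (the constant boundary strategy keeps $\pi_t\in[a,b]$ after the first trade, and the uncontrolled segment before $\tau_1$ stays in a compact subinterval once we account for where $\pi_0$ lies), so the stochastic integral is a genuine square-integrable martingale; this is exactly the kind of verification already implicit in Proposition~\ref{repr_growth} and in \cite{IS06_2}.

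\textbf{Applying the Verification Theorem.} With $(i)$--$(iv)$ in hand, part~\eqref{ver_a} of Theorem~\ref{thm:verification} gives the upper bound
\[
V^{o}(h)=\sup_{K\in\mathcal{K}_h}r_K\;\leq\;\lambda+\frac{c}{h}.
\]
With additionally $(v)$--$(vii)$ in hand — $Lv+g-\lambda=0$ on $(a,b)$, the smooth-fit/value-matching equalities at the two boundaries, and the frequency normalization $\liminf_{T\to\infty}\frac1T\E^K(|\{n:\tau_n\leq T\}|)=1/h$ for the constant boundary strategy $K$ with parameters $(a,\alpha,\beta,b)$ — part~\eqref{ver_b} of the same theorem yields that this $K$ lies in $\mathcal{K}_h$, attains equality in every inequality of the verification computation, and hence is optimal with
\[
V^{o}(h)=r_K=\lambda+\frac{c}{h}.
\]
That is precisely the claimed identity, and it exhibits an optimal constant boundary strategy together with the required constants $c>0$ and $\lambda\in\R$.

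\textbf{Where the real work sits.} The theorem itself is a one-line corollary; the substance is entirely in the lemmas feeding it. The genuinely delicate step is Lemma~\ref{lem:h_c}: one must know that as $c$ ranges over $(0,\infty)$ the mean inter-trade time $\E^{K_c}_{\nu_c}(\tau_{c,1})$ sweeps out all of $(0,\infty)$ continuously, which rests on (a) the a~priori bound $r_c\leq\|g\|_\infty-c\cdot\frac{1}{\E^{K_c}_{\nu_c}(\tau_{c,1})}$ combined with $r_c\geq0$ to force the mean time to $+\infty$ as $c\to\infty$, (b) the convergence $a_{c_n},\alpha_{c_n}\to A$, $b_{c_n},\beta_{c_n}\to B$ along a null sequence (from \cite{CIL}, noting $\gamma>0$ is not needed there) together with continuity of $h_1$ to force the mean time to $0$, and (c) continuity of $c\mapsto(a_c,\alpha_c,\beta_c,b_c)$ via the implicit function theorem applied to the system in Lemma~\ref{lem:fixed_costs}, then the intermediate value theorem. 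In the write-up of the final theorem none of this needs to be reproduced — it is already established — so the proof should simply read: ``By Proposition~\ref{prop:existence} the hypotheses of Theorem~\ref{thm:verification}\,(a)--(b) hold for suitable $v_h,\lambda_h,c_h$ and $a_h,\alpha_h,\beta_h,b_h$; the conclusion is then immediate from Theorem~\ref{thm:verification}.''
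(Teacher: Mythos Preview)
Your proof is correct and follows exactly the paper's own approach: the paper's proof reads, in its entirety, ``The result immediately follows from Proposition~\ref{prop:existence} together with Theorem~\ref{thm:verification}.'' Your additional remarks on the martingale condition~(iv) and on the content of Lemma~\ref{lem:h_c} are accurate but unnecessary here, since Proposition~\ref{prop:existence} already asserts that all of $(i)$--$(vii)$ hold.
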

\begin{proof}
The result immediately follows from Proposition \ref{prop:existence} together with Theorem \ref{thm:verification}.
\end{proof}

\begin{remark}\label{rem:construction}
Note that our line of argument is constructive, since we have characterized the parameters as a solution of a system of equations, that can be solved using standard numerical methods. A second way for finding the optimal parameters is to make use of Proposition \ref{prop:renewal}: Since all expressions are explicitly given, we may optimize \[r_K=\frac{\E^{K}_\nu g(\pi_{{\tau_1}},\phi(\pi_{{\tau_1}}))}{\E^{K}_\nu (\tau_1)},\]
over all parameters $a<\alpha\leq \beta<b$ under the restriction that
\[{\E^{K}_\nu{(\tau_1)}}=h,\]
{where all notations are given according to Subsection \ref{subs:constant_boundary}.} This can also be carried out using standard numerical procedures and seems to be more stable than the first way in our numerical examples.
\end{remark}
{
\begin{remark}
It is remarkable to note that our portfolio optimization problem with pure proportional transaction costs and a restricted set of trading strategies leads to similar equations as the problem with fixed and proportional transaction costs as given in  \cite{T06,T08}, but with the additional parameter $c$. In that reference the cost structure arises due to the assumption that the fixed and proportional transaction costs have to be paid from both the bond- and the stock-account, see Theorem \ref{thm:verification} and Remark \ref{anmerk:tamura}.
\end{remark}
}

\subsection{Explicit example}\label{subsec:explicit}
In this {sub}section, we want to give an explicit example for the previous results. We first consider the case of vanishing proportional transaction costs, that is $\gamma=0$, and a Merton ration $\pi^{*}=\tfrac{1}{2}$. In this framework, we are able to compute an explicit expression of the growth rate depending on the financial market and the average frequency of trading. Due to the symmetry of the situation it is natural to make the ansatz to choose the stopping boundaries $a,b$ symmetric around $\pi^*$, i.e. $1-a=b$, and to choose $\alpha=\beta=\pi^*$. Using Proposition \ref{prop:renewal} we have
\[\E^{K}_{\pi^*}(\tau)=h_1(a),\]
where $\tau=\inf\{t\ge 0:\pi_t\not\in[a,b]\}$ and
\[h_1(x)=\frac{1}{\sigma^2}\left(\log\frac{1-x}{x}\right)^{2}.\]
Hence
$\lim_{T\rightarrow\infty} \frac{1}{T}\E^{K}_{\pi^{*}}(|\{n:\tau_n\leq T\}|)=\frac{1}{h}$ if and only if
\[b=b_h=\frac{e^{\sigma\sqrt{h}}}{1+e^{\sigma\sqrt{h}}},\;\;\;a=a_h=1-b_h=\frac{1}{1+e^{\sigma\sqrt{h}}}.\]
It remains to be checked that the constant boundary strategy with parameters $(a_h,\pi^*,\pi^*,b_h)$ is indeed optimal. This can be verified immediately by applying Theorem \ref{thm:verification} with
\[\lambda=\frac{\sigma^2(b-\frac{1}{2})}{2\log(\frac{b}{1-b})},\;\;\;v(x)=\int_{\pi^*}^xg(y)dy,\;\;\;c=v(\pi^*)-v(b),\]
where
\[g(y)=\frac{1}{y(1-y)}\left(\frac{2\lambda}{\sigma^2}\log\frac{y}{1-y}+\frac{1}{2}-y\right).\]
Finally the growth rate is given by
\[V^{o}(h)=\frac{\log(1/2)-\log\left(\frac{e^{\sigma\sqrt{h}/2}}{1+e^{\sigma\sqrt{h}}}\right)}{h}.\]
A short calculation furthermore yields that $V^{o}(h)\rightarrow V^M$ for $h\rightarrow 0$, as expected. More precisely,
\[V^{o}(h)=\frac{\sigma^2}{8}-\frac{\sigma^4}{192}h+O(h^2).\]

\section{Numerical examples}\label{sec:example}

\subsection{Review of existing time-discretizations of the Merton problem without transaction costs}\label{sec:discrete}
To be optimal, the {\bf Merton-investor} with transaction costs {keeps his risky fraction process constant.} But since the stock price follows a geometric Brownian motion his wealth is changing at all times and so he has to adjust his portfolio continuously. Therefore, this investor may be considered as an unrealistic one. One way out is to restrict the trading times to those that the investor adjusts his portfolio only at times which are multiples of a fixed $h>0$. This leads to more realistic trading strategies. This setting was considered in \cite{relaxed}, and we follow this treatment in this subsection. In the following we call such an investor an {{\textbf{$\boldsymbol h$-investor}}.}
By using the theory of Markovian Decision Processes (see \cite{BR}) the following theorem can be easily shown.

\begin{proposition} Let $T=N\cdot h$ for $h>0$ and $N\in\mathds{N}$. Then
$$ \sup_{\pi} \mathds{E}\big{[} \log(X^{\pi}_{T}) \big{]}= N \cdot A(h) + \log(x)\;,$$
where
$$ A(h) :=  \sup_{a \in [0,1]} \E \big{[}  \log(aZ+(1-a))\big{]}\;, \quad  Z=Z(h) :=\exp(\sigma W_h + (\mu- \sigma^2 /2)h). $$
\label{001} Here, the supremum is taken over all strategies $\pi$, that allow to rebalance the portfolio at all discrete time points which are multiples of $h>0$.
\end{proposition}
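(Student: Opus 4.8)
The plan is to recognize Proposition~\ref{001} as a finite–horizon Markov decision problem with $N$ stages and independent, identically distributed one–step returns, and to extract the additive structure of $\log X_T$ explicitly. First I would fix what an admissible strategy for the \textbf{$h$-investor} is: rebalancing is allowed only at the grid points $0,h,2h,\dots,(N-1)h$, so a strategy is a sequence $(a_0,\dots,a_{N-1})$ where $a_k$ is an $\mathcal F_{kh}$-measurable $[0,1]$-valued random variable, the fraction of current wealth held in the stock on $[kh,(k+1)h)$. Since the portfolio is kept fixed (buy-and-hold, self-financing) on each such interval, writing $\theta_k$ and $\psi_k$ for the number of shares and bond units held there and using $B\equiv 1$ gives the recursion
\[
X_{(k+1)h}=X_{kh}\bigl(a_k Z_{k+1}+(1-a_k)\bigr),\qquad
Z_{k+1}:=\frac{S_{(k+1)h}}{S_{kh}}=\exp\!\bigl(\sigma(W_{(k+1)h}-W_{kh})+(\mu-\tfrac12\sigma^2)h\bigr),
\]
where the $Z_{k+1}$, $k=0,\dots,N-1$, are i.i.d.\ with the law of $Z=Z(h)$ and $Z_{k+1}$ is independent of $\mathcal F_{kh}$. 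Telescoping the logarithm yields $\log X_T=\log x+\sum_{k=0}^{N-1}\log\bigl(a_kZ_{k+1}+(1-a_k)\bigr)$.

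For the upper bound I would argue stage by stage. A one–line integrability estimate — from $\min(Z,1)\le aZ+(1-a)\le\max(Z,1)\le 1+Z$ one gets $|\log(aZ+1-a)|\le Z+(\log Z)^{-}$ uniformly in $a\in[0,1]$, and both $\E Z$ and $\E(\log Z)^{-}$ are finite — shows that $\Phi(a):=\E[\log(aZ+1-a)]$ is well defined, finite, concave, and (by dominated convergence) continuous on $[0,1]$; hence $A(h)=\sup_{[0,1]}\Phi$ is attained at some $a^*\in[0,1]$, and $A(h)\ge\Phi(0)=0$. Since $a_k$ is $\mathcal F_{kh}$-measurable and $Z_{k+1}$ is independent of $\mathcal F_{kh}$, the standard conditioning lemma gives $\E[\log(a_kZ_{k+1}+1-a_k)\mid\mathcal F_{kh}]=\Phi(a_k)\le A(h)$ a.s.; taking expectations, summing over $k$, and adding $\log x$ gives $\E[\log X_T]\le\log x+N A(h)$ for every admissible strategy, hence $\sup_\pi\E[\log X_T]\le\log x+N A(h)$.

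For the reverse inequality I would simply take the constant deterministic strategy $a_k\equiv a^*$: then every summand has expectation exactly $A(h)$, so $\E[\log X_T]=\log x+N A(h)$, which also shows the supremum is attained. (If one wishes to avoid invoking attainment of $A(h)$, an $\varepsilon$-optimal constant $a^{\varepsilon}$ followed by $\varepsilon\downarrow 0$ works just as well.) Combining the two inequalities gives the claim.

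I do not expect a serious obstacle here; the main thing requiring care is pinning down the $h$-admissible strategy class and justifying the buy-and-hold wealth recursion on each $[kh,(k+1)h)$, together with the elementary uniform bound that makes $\Phi$ continuous so that the conditioning step and the definition of $A(h)$ are legitimate. As an alternative (the MDP route alluded to in the text), one can set $V_k(x):=\sup\E[\log X_T\mid X_{kh}=x]$ and verify by backward induction, using the Bellman equation $V_k(x)=\sup_{a\in[0,1]}\E[V_{k+1}(x(aZ+1-a))]$ and the ansatz $V_k(x)=\log x+(N-k)A(h)$, that this ansatz solves it; specialising to $k=0$ gives the stated formula.
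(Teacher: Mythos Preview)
Your proposal is correct. The paper itself gives no proof beyond the sentence ``By using the theory of Markovian Decision Processes (see \cite{BR}) the following theorem can be easily shown,'' so any comparison is necessarily with that one-line appeal to MDP theory.

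Your primary argument is in fact more elementary than the route the paper alludes to: you exploit directly the additive structure of $\log X_T$ together with the i.i.d.\ increments and a conditioning step, which bypasses the Bellman equation entirely. The MDP/backward-induction variant you sketch at the end is precisely the approach the paper has in mind. The advantage of your direct argument is that it makes the integrability issues and the attainment of $A(h)$ explicit (both are swept under the rug by the paper), while the MDP formulation would generalize more readily to state-dependent one-step problems. Either route is adequate here; your write-up is more complete than what the paper provides.
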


\begin{remark}
$ Z $ is the random return of the stock in the time period $h$.
\end{remark}

Using Proposition \ref{001}, we get
\begin{align*}
\sup_{\pi}\liminf_{T\rightarrow \infty} \frac{1}{T}\E\big{(} \log(X^{\pi}_T) \big{)} \le \liminf_{T\rightarrow \infty} \frac{1}{T} \sup_{\pi} \E\big{(} \log(X^{\pi}_T) \big{)}   = \frac{A(h)}{h}\;.
\label{541}
\end{align*}

Let $a^{*}$ be the number which maximizes $ a \rightarrow \E \big{[}  \log(aZ+(1-a))\big{]}$ and let $\pi^{*}$ be the trading strategy which invests this fraction $a^{*}$ in the stock at all discrete time points which are multiples of $h$. Hence the growth rate with respect to $\pi^{*}$ equals $\tfrac{A(h)}{h}$. This yield the following proposition.

\begin{proposition} The growth rate $V^{\mbox{h-invest}}(h)$ of the h-investor equals $ \frac{A(h)}{h}$.
\end{proposition}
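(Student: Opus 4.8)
The inequality ``$\le$'' is already contained in the display preceding Proposition~\ref{001}: interchanging supremum and $\liminf$ and then invoking Proposition~\ref{001} with $T=Nh$ gives
\[
V^{\mbox{h-invest}}(h)=\sup_\pi\liminf_{T\to\infty}\frac1T\E\big(\log X^\pi_T\big)\le\liminf_{N\to\infty}\frac{N\,A(h)+\log x}{Nh}=\frac{A(h)}{h}.
\]
(For $T$ not a multiple of $h$ one replaces $N$ by $\lfloor T/h\rfloor$ and uses $\lfloor T/h\rfloor/T\to1/h$.) So the entire content of the statement is the matching lower bound, and the plan is to establish it by exhibiting the constant-fraction strategy $\pi^*$, which rebalances to the fraction $a^*$ at every time point $kh$, $k\in\N$.

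First I would compute the wealth along $\pi^*$: over each period $[(k-1)h,kh]$ the wealth is multiplied by $a^*Z_k+(1-a^*)$ with $Z_k:=\exp(\sigma(W_{kh}-W_{(k-1)h})+(\mu-\sigma^2/2)h)$ independent and distributed as $Z$. Hence for $T=Nh$,
\[
\E\big(\log X^{\pi^*}_{Nh}\big)=\log x+\sum_{k=1}^N\E\big(\log(a^*Z_k+(1-a^*))\big)=\log x+N\,A(h),
\]
so $\frac1T\E(\log X^{\pi^*}_T)\to A(h)/h$ along multiples of $h$. For general $T$ set $N=\lfloor T/h\rfloor$ and factor $X^{\pi^*}_T=X^{\pi^*}_{Nh}\cdot\big(a^*Z'_T+(1-a^*)\big)$ with $Z'_T:=\exp(\sigma(W_T-W_{Nh})+(\mu-\sigma^2/2)(T-Nh))$ independent of $X^{\pi^*}_{Nh}$, so that $\E(\log X^{\pi^*}_T)=\log x+N A(h)+\E\big(\log(a^*Z'_T+(1-a^*))\big)$.

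The one step that needs a genuine (but short) estimate — and the step I expect to be the main obstacle — is showing that the partial-period term $\E\big(\log(a^*Z'_T+(1-a^*))\big)$ stays bounded uniformly in $T$. The upper bound follows from $\log(1+u)\le u$, giving $\E(\log(a^*Z'_T+(1-a^*)))\le a^*(\E Z'_T-1)\le a^*(e^{|\mu|h}-1)$; for the lower bound, if $a^*>0$ one bounds $\log(a^*Z'_T+(1-a^*))\ge\log a^*+\sigma(W_T-W_{Nh})+(\mu-\sigma^2/2)(T-Nh)$, whose expectation is at least $\log a^*-|\mu-\sigma^2/2|h$, while if $a^*=0$ the term vanishes. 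Thus the partial contribution is $O(1)$, and since $N/T\to1/h$,
\[
\liminf_{T\to\infty}\frac1T\E\big(\log X^{\pi^*}_T\big)=\lim_{T\to\infty}\frac{N\,A(h)+O(1)}{T}=\frac{A(h)}{h}.
\]
Combining this with the upper bound from the first paragraph gives $V^{\mbox{h-invest}}(h)=A(h)/h$; everything except the uniform boundedness estimate is bookkeeping and the already-displayed inequality.
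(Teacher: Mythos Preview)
Your proposal is correct and follows exactly the paper's (very terse) argument: the upper bound is the displayed inequality obtained from Proposition~\ref{001}, and the lower bound comes from exhibiting the constant-fraction strategy $\pi^*$ that rebalances to $a^*$ at each multiple of $h$. You supply the details the paper omits---the wealth factorization over successive periods and the uniform boundedness of the partial-period term---but the approach is the same.
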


Since it is not possible to find $A(h)$ in closed form, the following approximation is useful: Similar calculations as in \cite[Section 3]{relaxed} based on a Taylor expansion yield that for large $N \in \mathds{N}$ and small $h$
\begin{equation*}
A(h) \approx \sup_{a\in[0,1]} \sum_{k=1}^{N} (-1)^{k+1} \frac{\E(Y^k(h))}{k},
\end{equation*}
where $Y(h) = a(Z(h)-1)$.

A further idea of limiting the trading times is the widely used approach to restrict the trading times to those at which an independent Poisson process jumps. Such situations are for example studied in \cite{MatLog, MatGrow} and \cite{PTCM}. In the following we call such an investor an {\textbf{$\boldsymbol \lambda$-investor}}.

Let $V^{\mbox{$\lambda$-invest}}(\lambda)$ be the optimal growth rate of the $\lambda$-investor. Since this investor trades only at the jump times of an independent Poisson process with intensity $\lambda >0$, the average number of trading times per unit is given by ${\lambda}$. From \cite{MatGrow}, Theorem 5.2, we get the following limit theorem for the value $V^{\mbox{$\lambda$-invest}}(\lambda)$.

\begin{proposition} $\quad \lambda(V^{M}-V^{\mbox{$\lambda$-invest}}(\lambda)) \rightarrow \frac{1}{2}\mu^2(1-\frac{\mu}{\sigma^2})^2\quad $ as $\; \lambda\rightarrow \infty$.
\end{proposition}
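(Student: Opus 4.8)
The plan is to reduce the $\lambda$-investor problem to an asymptotic analysis of a deterministic one-period optimization --- the exact analogue of Proposition~\ref{001} with the deterministic period $h$ replaced by an exponential inter-trade time --- and then to expand that optimization to second order in $1/\lambda$. \emph{Step 1 (reduction).} The $\lambda$-investor rebalances only at the jump times $0=T_0<T_1<\dots$ of a Poisson process with intensity $\lambda$ and holds a fixed number of shares between jumps; since the value is linear in wealth, one shows (by a renewal argument as in \cite{MatGrow}) that it is optimal to reset the risky fraction to a single constant $a$ at each trade time, and that over one inter-jump interval of independent length $\Delta\sim\mathrm{Exp}(\lambda)$ the log-wealth increases by $\log\!\big(aZ(\Delta)+(1-a)\big)$ with $Z(\Delta)=\exp\!\big(\sigma W_\Delta+(\mu-\tfrac{1}{2}\sigma^2)\Delta\big)$. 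The renewal reward theorem then yields
\[V^{\mbox{$\lambda$-invest}}(\lambda)=\lambda\,A_\lambda,\qquad A_\lambda:=\sup_{a\in[0,1]}\E\big[\log(aZ(\Delta)+(1-a))\big],\]
so it suffices to prove $A_\lambda=\frac{V^M}{\lambda}-\frac{1}{2\lambda^2}\mu^2\big(1-\frac{\mu}{\sigma^2}\big)^2+O(\lambda^{-3})$.

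\emph{Step 2 (expansion of $A_\lambda$).} Conditioning on $\Delta$ and using $W_\Delta\mid\Delta\sim N(0,\Delta)$ gives $\E[Z(\Delta)^j\mid\Delta]=e^{c_j\Delta}$ with $c_1=\mu$, $c_2=2\mu+\sigma^2$, $c_3=3\mu+3\sigma^2$, $c_4=4\mu+6\sigma^2$, hence $\E[Z(\Delta)^j]=\lambda/(\lambda-c_j)=1+c_j/\lambda+c_j^2/\lambda^2+O(\lambda^{-3})$ for $\lambda$ large. Writing $Y:=a(Z(\Delta)-1)>-1$ and $\log(1+Y)=Y-\tfrac{Y^2}{2}+\tfrac{Y^3}{3}-\tfrac{Y^4}{4}+R(Y)$, taking expectations term by term, and noting that the leading ($1/\lambda$) contributions of $\E[(Z(\Delta)-1)^3]$ and $\E[(Z(\Delta)-1)^4]$ vanish (so these powers only enter at order $1/\lambda^2$), one obtains, uniformly for $a$ in a fixed compact subinterval of $(0,1)$,
\[f(a):=\E[\log(aZ(\Delta)+(1-a))]=\frac{f_1(a)}{\lambda}+\frac{f_2(a)}{\lambda^2}+o(\lambda^{-2}),\]
with $f_1(a)=\mu a-\tfrac{1}{2}\sigma^2a^2$ and $f_2(a)=\mu^2a-\tfrac{1}{2}(2\mu^2+4\mu\sigma^2+\sigma^4)a^2+2\sigma^2(\mu+\sigma^2)a^3-\tfrac{3}{2}\sigma^4a^4$.

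\emph{Step 3 (supremum and conclusion).} The leading term $f_1$ is strictly concave with interior maximizer $a^*=\mu/\sigma^2\in(0,1)$ and $f_1(a^*)=V^M$. Concavity of $f_1$ plus the uniform $o(\lambda^{-2})$ error forces the maximizer $a_\lambda$ of $f$ into a fixed neighbourhood of $a^*$ with $a_\lambda=a^*+O(\lambda^{-1})$; since $f_1'(a^*)=0$, this shift alters $f$ only at order $\lambda^{-3}$, so $A_\lambda=V^M/\lambda+f_2(a^*)/\lambda^2+O(\lambda^{-3})$. A direct computation with $a^*=\mu/\sigma^2$ collapses $f_2$ to $f_2(a^*)=-\tfrac{1}{2}\sigma^4(a^*)^2(a^*-1)^2=-\tfrac{1}{2}\mu^2\big(1-\tfrac{\mu}{\sigma^2}\big)^2$, whence $\lambda\big(V^M-V^{\mbox{$\lambda$-invest}}(\lambda)\big)=\lambda\big(V^M-\lambda A_\lambda\big)=-f_2(a^*)+O(\lambda^{-1})\to\tfrac{1}{2}\mu^2\big(1-\tfrac{\mu}{\sigma^2}\big)^2$.

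The main obstacle is the uniform-in-$a$ remainder estimate $\E[R(Y)]=o(\lambda^{-2})$ needed to make the expansion of $f(a)$ legitimate and to justify exchanging the supremum with the asymptotics: one must bound both the fifth- and higher-order terms of the logarithmic series (e.g.\ via $\E|Z(\Delta)-1|^5=O(\lambda^{-5/2})$, which comfortably beats $\lambda^{-2}$) and the contribution of the rare event that $|Z(\Delta)-1|$ is large, on which the probability decays faster than any power of $1/\lambda$ by a Gaussian tail bound for $W_\Delta$. Everything else --- the conditional moment computations and the polynomial algebra producing $f_2(a^*)$ --- is routine, and the renewal reduction in Step~1 can alternatively simply be quoted from \cite{MatGrow}.
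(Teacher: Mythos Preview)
The paper does not prove this proposition at all: it is stated as a direct quotation of \cite{MatGrow}, Theorem~5.2, and serves only as background for the numerical comparisons in Section~\ref{sec:example}. So there is no proof in the paper to compare your attempt against.

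Your sketch, on the other hand, is a correct and self-contained argument. The renewal reduction to $V^{\lambda\text{-invest}}(\lambda)=\lambda A_\lambda$ is the natural analogue of Proposition~\ref{001}; the moment computations $c_j=j\mu+\tfrac{j(j-1)}{2}\sigma^2$ and the expansion of $\E[(Z(\Delta)-1)^k]$ via $\E[Z(\Delta)^j]=\lambda/(\lambda-c_j)$ are right; the formulas for $f_1$ and $f_2$ check out, and the algebraic collapse $f_2(a^*)=-\tfrac12\sigma^4(a^*)^2(a^*-1)^2=-\tfrac12\mu^2(1-\mu/\sigma^2)^2$ is correct. The only genuinely delicate point---which you flag yourself---is the uniform remainder bound $\E[R(Y)]=o(\lambda^{-2})$; your proposed treatment (fifth absolute moment of order $\lambda^{-5/2}$ on the bulk, exponential tail for $\Delta\geq 1$ or large $|Z-1|$) is the standard way to close this and works without difficulty. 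In short: your proposal goes well beyond what the paper itself offers here, and is a valid proof outline of the cited result.
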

This yield directly the following approximation: The growth rate $V^{\mbox{$\lambda$-invest}}(\lambda)$ can be approximated for large $\lambda$ by $$V^{M}-\frac{1}{2}\mu^2(1-\frac{\mu}{\sigma^2})^2 \frac{1}{\lambda}\,.$$

\subsection{Efficiency}
In this section, we want to compare the different investors. As proved above, we know that the best possibility for an investor, who does not want to adjust his portfolio more than $1/h$ times per time unit {in average}, is to use a constant boundary strategy. Now we want to compare this performance with the performance of the $h$-investor and the $\lambda$-investor introduced in Section \ref{sec:discrete}.

To compare the different type of investors, we need a performance criterion. Similar to \cite{relaxed}, we use the natural measure of the efficiency.

\begin{definition} The efficiencies of the $h$-investor, the $\lambda$-investor and the $o$-investor without transaction costs ($\gamma=0$) are given for $h>0$ by
\begin{align*}
E^{h}(h) = \frac{V^{\mbox{h-invest}}(h)}{V^{M}} \;, \qquad E^{\lambda}(h) = \frac{V^{\mbox{$\lambda$-invest}}({\frac{1}{h}})}{V^{M}} \qquad \text{ and } \qquad  E^{o}(h) = \frac{V^{o}(h)}{V^{M}}\;,
\end{align*}
respectively. With transaction costs the efficiency of the $o$-investor is given by
\begin{align*}
E^{o}_c(h) = \frac{V^{o}(h)}{V^{M}_c}\;.
\end{align*}
\end{definition}

\subsection{Comparison of the efficiencies without transaction costs}

Here in this subsection, we want to illustrate the results of the previous sections. The efficiencies of the $h$- and $\lambda$-investor can easily be obtained by using the results of Section \ref{sec:discrete}. If the Merton ratio $\tfrac{\mu}{\sigma^2}$ equals $\tfrac{1}{2}$, then we have already computed explicitly the growth rate of the $o$-investor depending on $h$, {cf. Subsection \ref{subsec:explicit}.} In the general case, the efficiency of the $o$-investor can be computed for a fixed $h>0$ by using Lemma \ref{lem:fixed_costs}. Then interpolating the efficiencies for different $h$ on a dense grid yield the efficiency curve of this investor.

\subparagraph{Example 1:} As basis for this numerical example, we fix the coefficients of the Black-Scholes market to {$\mu=8\%$ and $\sigma=40\%$}. Hence the Merton ratio $\tfrac{\mu}{\sigma^2}$ equals $\tfrac{1}{2}$. The results are shown in Figure \ref{fig1}. As we can see, the $o$-investor is doing much better than the $h$- and $\lambda$-investor. This can be explained in the following way: The $h$- and $\lambda$-investors adjust their portfolio independently of the evolvement of the financial market while the $o$-investor trades when the risky fraction process is too much unbalanced. Hence the $o$-investor saves trades in calm times that he can spend them in strenuous ones to improve his performance.

\begin{figure}[t]
\center
\includegraphics[width=8cm]{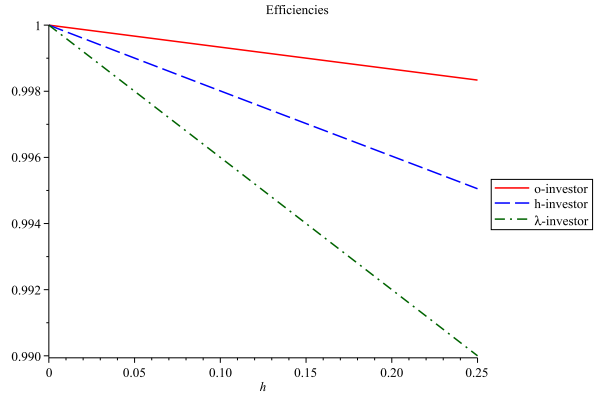}
\caption{Efficiencies of the $o$-, $h$- and $\lambda$-investor when the Merton ratio equals $\tfrac{1}{2}$.}
\label{fig1}
\end{figure}

\subparagraph{Example 2:} As basis for this numerical example, we fix the coefficients of the Black-Scholes market to {$\mu=\tfrac{8}{100}$ and $\sigma=\sqrt{\tfrac{2}{15}}$}. Hence the Merton ratio  $\tfrac{\mu}{\sigma^2}$ equals $\tfrac{3}{5}$. The results are shown in Figure \ref{fig2}. By the same arguments as in Example 1, the $o$-investor is performing much better.

\begin{figure}[t]
\center
\includegraphics[width=8cm]{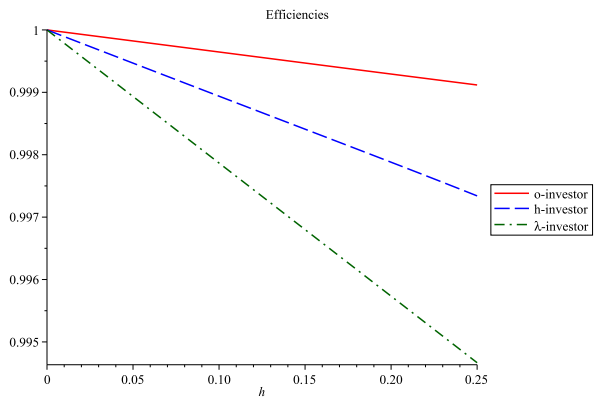}
\caption{Efficiencies of the $o$-, $h$- and $\lambda$-investor when the Merton ratio equals $\tfrac{3}{5}$.}
\label{fig2}
\end{figure}

\subsection{Efficiency with transaction costs}
The previous examples without transaction costs already show that even quite large values of $h$ only lead to a small loss of efficiency. This means that -- when using the optimal constant-boundary strategy -- restricting oneself to not trading too often in the long run does not influence the asymptotic growth rate too much. For a model with (proportional) transaction costs, the same result can be observed. As an example, we consider the model with parameters $\mu=9.6\%$, $\sigma=40\%$ and transaction costs of $\gamma=0.3\%$. We found the results using the second approach described in Remark \ref{rem:construction}. The results are given in Figure \ref{fig3} and \ref{fig4}. The effect is even stronger than in the examples without transaction costs: When using $h=0.2$, the efficiency is still larger than $99.995\%$: The asymptotic growth rate shrinks from $2.84795\%$ in the unrestricted setting (see Section 2.2) to $2.84782\%$ in our restricted setting with $h=0.2$. So one can argue that in the market with proportional transaction costs a \glqq relaxed\grqq\; investor who uses a constant boundary strategy has nearly no drawback compared to to an investor who follows the optimal strategy.

\begin{figure}[t]
\center
\includegraphics[width=7cm]{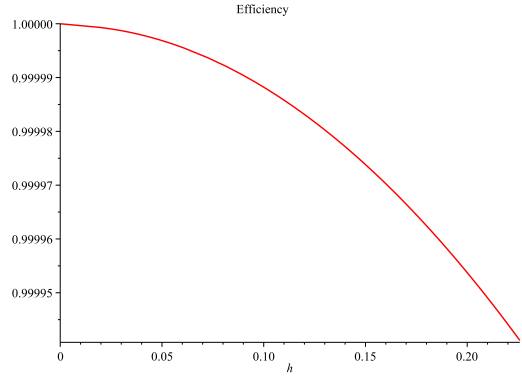}
\caption{Efficiency of the $o$-investor with transaction costs $\gamma=0.3\%$.}
\label{fig3}
\end{figure}

\begin{figure}[t]
\center
\includegraphics[width=7cm]{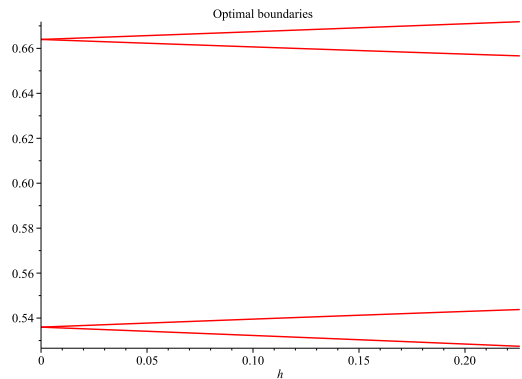}
\caption{Optimal boundaries of $o$-investor with transaction costs $\gamma=0.3\%$.}
\label{fig4}
\end{figure}

This discussion also sheds light on good portfolio strategies in cases with constant transaction costs as discussed, e.g., in \cite{K98}. Here, the investor has to pay a constant amount $c>0$ for each transaction. In this case, the optimal strategy does not only depend on the fraction invested into the stock, but also on the total wealth. This makes the more complicated and constant boundaries as described before cannot expected to be optimal anymore. But the numerical results given before suggest constant-boundary strategies with long average waiting times between each two transactions perform very good. Using these strategies, the constant costs only have to be paid each $h$ time units (years). Therefore, even for a moderate portfolio value, the results suggest that constant boundary strategies seem to be good strategies also in this case.

\bibliographystyle{apalike2}
\bibliography{bibliography}

\end{document}